\documentclass[conference,10pt]{IEEEtran}
\IEEEoverridecommandlockouts
\usepackage[english]{babel}
\usepackage{cite}
\usepackage{tikz,graphicx,subfigure,epstopdf}
\usepackage{lipsum,setspace,multirow,array}
\usepackage[small,sc]{caption}
\usepackage{amsmath,amsfonts,amssymb,amscd,amsthm,amstext}
\usepackage{mleftright}
\newcommand{\pren}[1]{\mleft(#1\mright)}

\theoremstyle{plain}
\newtheorem{thm}{Theorem}
\newtheorem{prop}{Proposition}
\newtheorem{cor}{Corollary}
\newtheorem{lem}{Lemma}

\theoremstyle{definition}
\newtheorem{defn}{Definition}
\newtheorem*{defn*}{Definition}

\theoremstyle{remark}
\newtheorem*{remark*}{Remark}


\DeclareMathOperator*{\one}{\mathbf{1}}

\DeclareMathOperator*{\mcalr}{\mathcal{R}}
\renewcommand{\d}[1]{\ensuremath{\operatorname{d}\!{#1}}}

\newcommand{\R}{\mathbb{R}}

\newcommand{\Cc}{\mathcal{C}}
\newcommand{\Pp}{\mathcal{P}}

\newcommand{\E}{\mathbb{E}}
\newcommand{\Pb}{\mathbb{P}}

\newcommand{\bfx}{\mathbf{x}}

\newcommand{\tail}{\bar{F}}

\newcommand{\tx}{\textrm{tx}}
\newcommand{\rx}{\textrm{rx}}
\newcommand{\sinr}{\mathrm{SINR}}

\newcommand{\fref}[1]{Figure~\ref{#1}}
\newcommand{\tref}[1]{Table~\ref{#1}}
\newcommand{\eref}[1]{(\ref{#1})}

\newcommand{\aref}[1]{Appendix~\ref{#1}}
\newcommand{\lemref}[1]{Lemma~\ref{#1}}
\newcommand{\thmref}[1]{Theorem~\ref{#1}}
\newcommand{\propref}[1]{Proposition~\ref{#1}}
\newcommand{\corref}[1]{Corollary~\ref{#1}}

\begin{document}

\title{Coverage and Capacity Scaling Laws in Downlink Ultra-Dense Cellular Networks}
\author{\IEEEauthorblockN{Van Minh Nguyen and Marios Kountouris}%
	\IEEEauthorblockA{Mathematical and Algorithmic Sciences Lab, France Research Center, Huawei Technologies Co. Ltd. \\
		\{vanminh.nguyen, marios.kountouris\}@huawei.com}}
\maketitle

\begin{abstract}
    Driven by new types of wireless devices and the proliferation of bandwidth-intensive applications, data traffic and the corresponding network load are increasing dramatically. Network densification has been recognized as a promising and efficient way to provide higher network capacity and enhanced coverage. Most prior work on performance analysis of ultra-dense networks (UDNs) has focused on random spatial deployment with idealized singular path loss models and Rayleigh fading. In this paper, we consider a more precise and general model, which incorporates multi-slope path loss and general fading distributions. We derive the tail behavior and scaling laws for the coverage probability and the capacity considering strongest base station association in a Poisson field network. Our analytical results identify the regimes in which the signal-to-interference-plus-noise ratio (SINR) either asymptotically grows, saturates, or decreases with increasing network density. We establish general results on when UDNs lead to worse or even zero SINR coverage and capacity, and we provide crisp insights on the fundamental limits of wireless network densification.
\end{abstract}

\section{Introduction}\label{s:Introduction}
With exponential increase in data traffic driven by a new generation of wireless devices, data is expected to overwhelm cellular network capacity in the near future. Heterogeneous cellular network (HetNet) deployment is a promising and effective way to provide high cellular network capacity by overlaying conventional macrocell cellular architecture with heterogeneous architectural features, such as small cellular access points (picocells and femtocells), low-power fixed relays, and distributed antennas. Ultra-dense networks (UDNs) are expected to achieve higher data rates and enhanced coverage by exploiting spatial frequency reuse, while retaining at the same time seamless connectivity and mobility. Inspired by the attractive features and potential advantages of UDNs, their development and deployment has been gaining momentum in both wireless industry and research community during the last few years. It has also attracted the attention of standardization bodies, e.g. 3GPP LTE-Advanced. 

Existing cellular network analyses have focused on stochastic geometry based models, in which base stations (BSs) are located according to a Poisson point process (PPP). Most prior results have considered a standard singular power-law path loss model and Rayleigh fading, as a means to provide tractable analysis of coverage probability and other key performance metrics in downlink cellular networks. For instance, recent results using the aforementioned models with closest BS \cite{Andrews2011} and strongest BS association \cite{ Dhillon2012} show that the coverage probability does not depend on the network density when thermal noise is negligible. However, the singular single-slope path loss model used therein leads to unrealistic results in certain scenarios and fails to accurately capture the dependence of path loss exponent on the link distance. In \cite{Zhang2015}, the authors study the impact of dual slope path loss on the performance of downlink UDNs under closest BS association and show that both coverage and capacity performance strongly depends on the network density. More precisely, it is shown that the network coverage in terms of signal-to-interference-plus-noise ratio (SINR) is maximized at some finite density and there exists a phase transition in the near-field path loss exponent with ultra densification (i.e. network density goes to infinity). In \cite{Chen2012}, the authors consider strongest BS association with bounded path loss and lognormal fading and show that the coverage attains a maximum point before going down when the network densifies. Based on system-level simulations, \cite{LopezPerez2015} shows that there is a fundamental limit of 1 cell/user in UDNs, although such deployments are neither cost nor energy efficient. Despite shedding light on the performance limits of UDNs and the optimal network density, previous work has mainly focused on the effect of path loss singularity \cite{Inaltekin2009, Haenggi2009, Nguyen2011}  or boundedness \cite{Zhang2015}. The effect of fast fading has been only investigated in some particular contexts, e.g. \cite{Haenggi2009}, \cite[Chap.~8]{Nguyen2011}. Furthermore, emerging utilization of advanced communication and signal processing techniques is expected to enhance the channel gain, which in some cases may have a regularly varying tail \cite{Rajan2015} or a diffuse power component \cite{Durgin2002}.

In this work, we analyze the SINR, coverage probability and capacity of downlink UDNs under multi-slope path loss and general fading distributions, considering strongest BS association in PPP networks. In particular, using general fading distributions, we provide the ability to capture the effect of either fast fading, shadowing, or composite fast fading-shadowing. We study the tail behavior of the received signal power, as well as the SINR, coverage and capacity scaling in the ultra-dense regime. Our results 
provide useful insights on the fundamental limits of network densification with the following main conclusions:
\begin{itemize}
	\item Under the Poisson field assumption, the most affecting component of the path loss is its near-field exponent $\beta_0$. Bounded path loss (obtained for $\beta_0 = 0$) is just a special case of $\beta_0 < d$ with $d$ being the network dimension. 
	\item The effect of fading on the performance scaling is as significant as path loss, and regularly varying fading distributions have the same effect as path loss singularity.
	\item In more conventional cases when the near-field path loss exponent is greater than the free-space dimension (i.e. $\beta_0 > d$), or when fading is heavy tailed (i.e. $\tail_m \in \mcalr_{-\alpha}$, $\alpha \in (0,1)$), the coverage and capacity saturate at a limiting bound when the network density increases.
	\item In more realistic cases with $\beta_0 < d$ (in particular $\beta_0 = 0$) and fading being less heavy tailed or even truncated, both coverage and capacity exhibit an `inverse U' behavior w.r.t. network density, i.e. both are maximized at finite density then vanish when the network further densifies.
	\item Finally, all standard fading models, such as Rayleigh, lognormal, Gamma, and their composite forms, belong to the same class of fading distributions, which leads to coverage and capacity maximization at a certain finite network density.
\end{itemize}  


\section{System Model}\label{s:Model} 

\subsection{Network Model} 
Consider a typical downlink user located at the origin and that the network is composed of cell sites located at positions $\{\bfx_i, i = 0,1,\ldots\}$. For convenience, cell sites are referred to as nodes, whereas the \emph{typical user} is simply referred to as \emph{user}. Unless otherwise stated, $\{\bfx_i\}$ are assumed to be random variables independently distributed on the \emph{network domain} according to a homogeneous Poisson point process (PPP) of intensity $\lambda$, denoted by $\Phi$. In prior work, the entire $d$-dimensional Euclidean space $\R^d$, where $d = 2$ is usually assumed as network domain. Since network domain is in reality limited, and far-away nodes are less relevant to the typical user due to path loss, we assume that the distance from the user to any node is upper bounded by some constant $0 < R_{\infty} < \infty$, which can be arbitrarily large. Each node transmits with some power that is independent to the others but is not necessarily constant. 

\subsection{Propagation Model} 
Let $l: \R_+ \to \R_+$ represent the path loss function. The receive power $P_{\rx}$ is related to the transmit power $P_{\tx}$ by $P_{\rx} = P_{\tx}/l(r)$ with $r$ being the transmitter-receiver distance. Physics laws require that $l(r) \leq 1, \forall r$. However, in the literature, $l(\cdot)$ has been usually assumed to admit a power-law model, i.e. $l(r) \sim r^{\beta}$ where $\beta$ is the path loss exponent satisfying $\beta \geq d$. This far-field propagation model has been widely used mainly due to its tractability. However, for short ranges, especially when $r \to 0$, this model is no longer relevant and becomes singular at the origin. In the context of network densification where the inter-site distance becomes smaller, the above singular model may be unsuitable. Moreover, the dependence of the path loss exponent on the distance in emerging millimeter wave (mmWave) communications \cite{Deng2015,Zhang2015} advocates the use of a more generic path loss function. In this work, the path loss is modeled as follows
\begin{equation}\label{eq:PL}
    l(r) = \sum_{k=0}^{K-1} A_k r^{\beta_k} \one(R_k \leq r < R_{k+1}),
\end{equation}
where $\one(\cdot)$ is the indicator function, $K \geq 1$ is a given constant characterizing the number of path loss slopes, $R_k$ are constants satisfying
\begin{equation}\label{eq:plRange}
    0 = R_0 < R_1 < \ldots < R_{K-1} < R_K = R_{\infty},
\end{equation}
$\beta_k$ denotes the path loss exponent satisfying
\begin{subequations}\label{eq:plExp}
\begin{align}
    \beta_0 & \geq 0, \label{eq:plExpA}\\
    \beta_k & \geq d-1, \text{ for } k = 1,\ldots,K-1, \label{eq:plExpB}\\
    \beta_k & < \beta_{k+1}, \text{ for } k = 0,\ldots,K-2, \label{eq:plExpC}
\end{align}
\end{subequations}
and $A_k$ are constants to maintain continuity of $l(\cdot)$, i.e.
\begin{equation}\label{eq:plScale}
    A_k > 0, \text{ and } A_k R_{k+1}^{\beta_k} = A_{k+1} R_{k+1}^{\beta_{k+1}},
\end{equation}
for $k = 0,\ldots,K-2$. For notational simplicity, we also use the following notation
\begin{equation}
    \alpha_k = d/\beta_k, \quad \text{for } k = 0,\ldots,K-1.
\end{equation}
The above model (cf. \eref{eq:PL}) captures that the path loss exponent varies with the distance while remaining unchanged within a certain range. In principle, free-space propagation in $\R^3$ has path loss exponent equal to 2 (i.e. $\beta = d-1$), whereas in realistic scenarios, path loss models often include antenna imperfections and empirical models usually result in the general condition \eref{eq:plExpB} for far-field propagation. Condition \eref{eq:plExpC} models the physical property that the path loss increases faster as the distance increases. Notice, however, that this condition is not important in the subsequent analytical development. Finally, condition \eref{eq:plExpA} is related to the near field (i.e. it is applied to the distance range $[0, R_1]$). 

The multi-slope path loss function, as defined above, has the following widely used special cases:
\begin{itemize}
    \item $K=1$, $\beta_0 \geq d$: $l(r) = A_0 r^{\beta_0}$, which is the standard unbounded path loss,
    \item $K=2$, $\beta_0 = 0$: $l(r) = \max(A_0, A_1r^{\beta_1})$, which is the bounded path loss recommended by the 3GPP standard, in which $A_0$ is referred to as the minimum coupling loss.
\end{itemize}
Due to the particular importance of path loss boundedness so as to have a realistic model, we have the following definitions:
\begin{defn}
    A path loss function $l: \R_+ \to \R_+$ is said \emph{bounded} if and only if $1/l(r) < \infty, \forall r \in \R_+$, and \emph{unbounded} otherwise. Furthermore, the path loss function $l(\cdot)$ is said \emph{physical} if and only if  $1/l(r) \leq 1, \forall r \in \R_+$.
\end{defn}
It is clear that the path loss function \eref{eq:PL} is bounded if and only if (iff) $\beta_0 = 0$, and is physical iff $\beta_0 = 0$ and $A_0 \geq 1$.

Besides path loss, shadowing and fast fading are also sources of wireless link variations, which are commonly referred to as \emph{fading} in the sequel. Let $m_i$ be a variable containing transmit power, fading, and any gains or attenuation other than path loss from $i$-th node to the user. Given node location $\{\bfx_i\}$, the variables $\{m_i\}$ are assumed not identical to zero and independently distributed according to some distribution $F_m$. 
To this end, the signal power, say $P_i$, that the user receives from $i$-th node is expressed as $P_i = m_i / l(||\bfx_i||)$ where $||\cdot||$ is the Euclidean distance.

\subsection{Performance Metrics}
The quality of the signal received from $i$-th node is expressed in terms of its SINR as
\begin{equation*}
	\sinr_i = P_i/(I_i + W),
\end{equation*}
where $I_i = \sum_{j \neq i} P_j$ is the aggregate interference with respect to $i$-th node's signal, and thermal noise at the user's receive antenna is assumed Gaussian with average power $W$. In addition, we define $I = \sum_{j} P_j$ to be the total interference. The main metrics used in this paper are the coverage probability and the average rate that the user experiences from its \emph{serving cell}. Let $Y$ denote the SINR that the user receives from its serving cell. The \emph{SINR coverage probability}, denoted by $\Pp_y$, is defined as the probability that $Y$ is larger than a given target $y$, and the \emph{capacity}, denoted by $\Cc$, is defined as the Shannon rate (in nats/s/Hz) assuming Gaussian codebooks, i.e.
\begin{equation}\label{eq:PerfDefn}
	\Pp_y = \Pb(Y \geq y), \text{ and } \Cc = \E(\log(1+Y)).
\end{equation}

\subsection{User Association} 
The above performance metrics are defined with respect to the user's serving cell, which in turn depends on the underlying user association scheme. \emph{Strongest cell association} is used in this work, i.e. the user is connected to the cell that provides the best/strongest signal quality (in practice, time averaging of the signal is usually performed to avoid frequent handover due to fast fading). In a longer version of this work, we also investigate the case of nearest cell association and show the effect of user association on performance scaling. Under strongest cell association, the SINR of the user is given by $Y = \max_i \sinr_i$ and can be expressed as \cite{Nguyen2010}
\begin{equation}\label{eq:YMI}
    Y = M/(I + W - M), \quad \text{with } M = \max_{i}P_i.
\end{equation}

\subsection{Notation} 
Quantities whose dependence on the network density $\lambda$ is important, are denoted as $\cdot(\lambda)$, e.g. $Y(\lambda)$, $I(\lambda)$, and $M(\lambda)$.
We also denote by $r$, $m$, and $P$, the distance, associated fading, and received power from a random node, respectively. Let $F_P$ be the distribution of $P$, and $\tail_P = 1 - F_P$. In addition, for real functions $f$ and $g$, we say $f = O(g)$ if $\lim_{x \to \infty}(f(x)/g(x)) = c$ for $c \in (0, \infty)$, $f \sim g$ if $\lim_{x \to \infty}(f(x)/g(x)) = 1$, and $f = o(g)$ if $\lim_{x \to \infty}(f(x)/g(x)) = 0$. We also use notation $\overset{d}{\to}$, $\overset{p}{\to}$, $\overset{a.s}{\to}$ to denote the convergence in distribution, convergence in probability, and almost sure (a.s) convergence, respectively. Finally, for two random variables $X_1$ and $X_2$ defined on the same probability space, we say that $X_1$ is \emph{statistically} greater than $X_2$, denoted by $X_1 \overset{st}{>} X_2$, if $\Pb(X_1 \geq x) > \Pb(X_2 \geq x)$ $\forall x$. Similar definition is for $\overset{st}{\geq}$, $\overset{st}{<}$, and $\overset{st}{\leq}$.

\begin{defn}[Regular variation \cite{Embrechts1997}] 
	A positive, Lebesgue measurable function $h$ on $(0,\infty)$ is called \emph{regularly varying} with index $\alpha \in \R$ at $\infty$ if $\lim_{x \to \infty} \frac{h(tx)}{h(x)} = t^{\alpha}$ for $t > 0$. In particular, $h$ is called \emph{slowly varying} (\emph{rapidly varying}, resp.) (at $\infty$) if $\alpha = 0$ (if $\alpha = -\infty$, resp.). We denote by $\mcalr_{\alpha}$ the class of regularly varying functions with index $\alpha$.
\end{defn}
Note that if $h$ is a regularly varying function with index $\alpha$ at $\infty$, it can be represented as $h(x) = x^{\alpha} L(x)$ as $x \to \infty$ for some slowly varying function $L \in \mcalr_0$.

\begin{defn}[Tail-equivalence]
    Two distributions $F$ and $H$ are called \emph{tail-equivalent} if they have the same right endpoint, say $x_{\infty}$, and $\lim_{x \uparrow x_{\infty}} \bar{F}(x)/\bar{H}(x) = c$ for $0 < c < \infty$.
\end{defn}

\section{Tail Behavior of Received Signal Power}\label{s:TailBeh}
The network performance mainly depends on the received SINR and hence is a function of $Y$, which in turn depends on $M$ and $I$ (cf. \eref{eq:YMI}). The behavior of the maximum $M$ and the sum $I$ is totally determined by that of the received power $P_i$. Therefore, 
we first study the signal power $P_i$, in particular its tail behavior using tools from extreme value theory. 

\begin{prop}\label{prop:DistanceCDF}
    If  $R_{\infty} < \infty$, then the distance from the user to a random node admits a non-degenerate distribution given as $G(r) = \pren{r/R_{\infty}}^d$ for $r \in [0,R_{\infty}]$,
    and $G(r) = 1$ for $r \geq R_{\infty}$.
\end{prop}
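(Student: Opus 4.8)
The plan is to identify the network domain explicitly and then exploit the defining spatial-uniformity property of a homogeneous PPP. First I would observe that the assumption that every node lies within distance $R_\infty$ of the user at the origin means the network domain is exactly the ball $B(0,R_\infty) = \brac{\bfx \in \R^d : \|\bfx\| \leq R_\infty}$. Restricting the homogeneous PPP $\Phi$ of intensity $\lambda$ to this bounded set, the total number of nodes $N = \Phi(B(0,R_\infty))$ is Poisson with finite mean $\lambda\,|B(0,R_\infty)|$, and — this is the key structural fact — conditioned on $N = n \geq 1$, the $n$ node locations are i.i.d.\ and uniformly distributed on $B(0,R_\infty)$. Consequently a node drawn at random (the typical point under the process's empirical distribution) is uniformly distributed on the ball, so its distance law is determined purely by volume ratios.

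Next I would compute the CDF $G$ directly. For a uniform point $\bfx$ on $B(0,R_\infty)$ and any $r \in [0,R_\infty]$,
\begin{equation*}
    G(r) = \Pb(\|\bfx\| \leq r) = \frac{|B(0,r)|}{|B(0,R_\infty)|}.
\end{equation*}
Since the Lebesgue measure of a $d$-dimensional ball of radius $r$ equals $\omega_d r^d$, with $\omega_d$ the volume of the unit ball in $\R^d$, the constant $\omega_d$ cancels and $G(r) = \pren{r/R_\infty}^d$. For $r \geq R_\infty$ the entire domain is contained in $B(0,r)$, so $G(r) = 1$, which matches the claimed expression.

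For the non-degeneracy claim I would note that $R_\infty < \infty$ is precisely what guarantees a finite, strictly positive normalizing volume $|B(0,R_\infty)| = \omega_d R_\infty^d \in (0,\infty)$, so $G$ is a genuine CDF that increases continuously from $0$ to $1$ rather than concentrating all its mass at a single point; were $R_\infty = \infty$, the volume ratio would collapse to $0$ for every finite $r$ and no proper distance distribution would exist.

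The step I expect to be most delicate — though it is entirely standard — is the rigorous justification that a random node is uniformly distributed, which rests on the conditional-uniformity property of the PPP and implicitly on conditioning on $N \geq 1$; the volume computation itself is routine. The main care needed is simply to pin down the notion of \emph{random node} unambiguously so that the volume-ratio argument applies directly.
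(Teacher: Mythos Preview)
Your argument is correct and follows essentially the same route as the paper: both rely on the conditional-uniformity property of a homogeneous PPP on the bounded domain $B(0,R_\infty)$ to conclude that a random node is uniformly distributed there, and then read off $G(r)$ as the volume ratio $(r/R_\infty)^d$. Your version is in fact slightly more explicit than the paper's, spelling out the volume computation and the role of $R_\infty<\infty$ in guaranteeing non-degeneracy.
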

\begin{proof}
    Under the assumption that BSs are distributed according to a homogeneous PPP, the nodes of a realization $\phi$ of $\Phi$ are uniformly distributed. 
    Thus, given $\phi$ and the assumption that $R_{\infty} < \infty$, the distance distribution to a random node of $\phi$, say $G(\cdot; \phi)$, is $G(r;\phi) = \pren{r/R_{\infty}}^d$ for $0 \leq r \leq R_{\infty}$, and $G(r;\phi) = 1$ for $r \geq R_{\infty}$. Then, taking $G(r) = \E_{\phi}(G(r;\phi))$, the result follows.
\end{proof}
\begin{remark*}
	First, note that the distribution $G$ is different from the usual \emph{void probability}, which is the distance to the closest node (nearest neighbor). Second, we can see that for unbounded network domains ($R_{\infty} = \infty$), the distance to a random node does not have a non-degenerate distribution. This is because under the PPP assumption, nodes at equal distance increases with the circumference, which tends to infinity when the outer distance tends to infinity, leading to an absorption of nodes. Therefore, using a limited network domain is not only more realistic, but also useful to have a normally behaving distribution of the distance.
\end{remark*}

\begin{prop}\label{prop:tailP}
    Denote $a_{K} = A_{K-1} R_{\infty}^{\beta_{K-1}}$, and for $k = 0,\ldots, K-1$ denote $a_k = A_k R_k^{\beta_k}$, and
    \begin{equation}\label{eq:Jk}
         J_k(t) = \frac{\E\pren{m^{\alpha_k} \one\pren{a_k t \leq m < a_{k+1} t}}}{A_k^{\alpha_k} R_{\infty}^d} t^{-\alpha_k}.
    \end{equation}
    Then,
    \begin{equation}\label{eq:tailP}
        \tail_P(t) = \tail_m\pren{a_K t} + \sum_{k=k_0}^{K-1}J_k(t),
    \end{equation}
    where $k_0=0$ for $\beta_0 > 0$, and $k_0=1$ for $\beta_0 = 0$.
\end{prop}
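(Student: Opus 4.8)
The plan is to reduce $\tail_P(t) = \Pb(P > t) = \Pb\pren{m > t\, l(r)}$ to an expectation over the fading variable $m$ alone, exploiting that $m$ and the distance $r$ of a random node are independent (the model stipulates that, given locations, the $m_i$ follow $F_m$ irrespective of position). Writing $\tail_P(t) = \Pb\pren{l(r) < m/t} = \E_m\bket{\Pb\pren{l(r) < m/t \mid m}}$, the whole computation hinges on the law of the random variable $l(r)$, where $r \sim G$ from \propref{prop:DistanceCDF}. First I would determine this law by inverting the piecewise power law \eref{eq:PL} layer by layer: on $[R_k,R_{k+1})$ the map $l(r) = A_k r^{\beta_k}$ is strictly increasing with values in $[a_k, a_{k+1})$, so for such $v$ the event $\{l(r) < v\}$ equals $\{r < (v/A_k)^{1/\beta_k}\}$, giving $\Pb\pren{l(r) < v} = \pren{(v/A_k)^{1/\beta_k}/R_\infty}^d = (v/A_k)^{\alpha_k}/R_\infty^d$ by the identity $\beta_k \alpha_k = d$.

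The key structural fact to verify next is that the value-windows $[a_k,a_{k+1})$ tile the range of $l$ without gaps. This is precisely where the continuity condition \eref{eq:plScale} enters: $a_{k+1} = A_{k+1}R_{k+1}^{\beta_{k+1}} = A_k R_{k+1}^{\beta_k}$ forces the value $l(R_{k+1}^-)$ at the top of layer $k$ to coincide with $a_{k+1}$, the bottom of layer $k+1$. The top value $a_K = A_{K-1}R_\infty^{\beta_{K-1}} = l(R_\infty^-)$ is the largest attainable path loss, so any $v \geq a_K$ yields $\Pb\pren{l(r) < v} = 1$. Substituting $v = m/t$ and taking $\E_m$ then decomposes into the saturation event $\{m \geq a_K t\}$, which contributes $\E_m[\one(m \geq a_K t)] = \tail_m(a_K t)$, and the layer contributions $\frac{t^{-\alpha_k}}{A_k^{\alpha_k}R_\infty^d}\E\pren{m^{\alpha_k}\one(a_k t \leq m < a_{k+1}t)} = J_k(t)$, exactly as in \eref{eq:Jk}.

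The main obstacle will be the degenerate near-field layer. For $\beta_0 > 0$ everything is uniform: $a_0 = A_0 R_0^{\beta_0} = 0$ since $R_0 = 0$, so layer $0$ occupies $[0,a_1)$ and the sum runs from $k_0 = 0$, producing \eref{eq:tailP}. For $\beta_0 = 0$, however, $l \equiv A_0$ on $[0,R_1)$, so $\alpha_0 = d/\beta_0 = \infty$ and the change of variable collapses; instead $l(r)$ carries an atom at $A_0 = a_0 = a_1$ of mass $(R_1/R_\infty)^d$. The delicate point is to check that this atom exactly completes the missing lower part of the layer-$1$ cumulative probability, so that $\Pb\pren{l(r) < v} = (v/A_1)^{\alpha_1}/R_\infty^d$ continues to hold for $v \in (A_0,a_2)$; consequently the $k=0$ term disappears and the sum begins at $k_0 = 1$, again giving \eref{eq:tailP}.

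As a consistency check I would also sketch the dual route that conditions on $r$: $\tail_P(t) = \E_r\bket{\tail_m\pren{t\, l(r)}}$, then on each layer performs the substitution $s = A_k r^{\beta_k}$ and integrates by parts to turn the $\tail_m$-integral into the moment $\E\pren{m^{\alpha_k}\one(\cdots)}$. This again yields $J_k(t)$ but now accompanied by boundary terms $\frac{R_{k+1}^d}{R_\infty^d}\tail_m(a_{k+1}t) - \frac{R_k^d}{R_\infty^d}\tail_m(a_k t)$, which telescope across adjacent layers via the same continuity identity, leaving $\tail_m(a_K t)$ at the outer edge $R_K = R_\infty$ and $0$ at $R_0 = 0$. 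Since this second path requires tracking and cancelling the boundary terms explicitly, I would present the conditioning-on-$m$ argument as the primary proof and relegate the integration-by-parts version to a verifying remark.
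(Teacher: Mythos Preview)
Your proposal is correct. The paper, however, follows precisely what you relegate to a ``verifying remark'': it conditions on $r$, writes $\tail_P(t) = \sum_{k} \int_{R_k}^{R_{k+1}} \tail_m(A_k r^{\beta_k} t)\,G(\d r)$, integrates each layer by parts to produce $J_k(t)$ plus the boundary terms $\tail_m(A_k r^{\beta_k} t)G(r)\big|_{R_k}^{R_{k+1}}$, and then telescopes the latter via \eref{eq:plScale} to leave only $\tail_m(a_K t)$; the $\beta_0=0$ case is handled by computing $\mathcal{I}_0$ separately. Your primary route---conditioning on $m$ and reading off the law of $l(r)$ layer by layer---is a genuinely different and cleaner argument: the piecewise formula $\Pb(l(r)<v) = (v/A_k)^{\alpha_k}/R_\infty^d$ for $v\in[a_k,a_{k+1})$ already \emph{is} the telescoped result, so no boundary bookkeeping is required, and the $\beta_0=0$ degeneracy is resolved by the observation that the atom at $A_0=a_1$ is absorbed into the layer-$1$ formula rather than by a separate side computation. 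The paper's approach has the minor advantage of being mechanical (one never needs to invert $l$), but your argument makes the structure of the decomposition more transparent and explains \emph{why} the boundary terms vanish rather than verifying it after the fact.
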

\begin{proof} See \aref{appx:ProofTail}. \end{proof}

Based on \propref{prop:tailP}, we derive the following main result:
\begin{thm}\label{thm:tailequiv} The tail distribution (CCDF) of the received signal power depends on the tail distribution of the fading and the path loss function as follows:
    \begin{itemize}
     \item If $\tail_m \in \mcalr_{-\alpha}$ with $\alpha \in [0,\infty]$, then $\tail_P \in \mcalr_{-\rho}$ where $\rho = \min(\alpha_0, \alpha)$ with the convention that $\alpha_0 = +\infty$ for $\beta_0 = 0$, and $\min(\infty,\infty) = \infty$.
     \item If $\tail_m = o(\bar{H})$ with $\bar{H} \in \mcalr_{-\infty}$, then $\tail_P(t)$ and $\tail_m(A_0 t)$ are tail-equivalent for $\beta_0 =0$, $\tail_P \in \mcalr_{-\alpha_0}$ for $\beta_0 > 0$.
    \end{itemize}
\end{thm}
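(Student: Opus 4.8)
The plan is to start from the exact representation $\tail_P(t)=\tail_m(a_K t)+\sum_{k=k_0}^{K-1}J_k(t)$ of \propref{prop:tailP}, classify the regular-variation index of each summand separately, and then combine them. The workhorse is Karamata's theorem for integrals of regularly varying functions, applied to the truncated moments $\int_{a_k t}^{a_{k+1}t}s^{\alpha_k}\,dF_m(s)$ appearing in \eref{eq:Jk}, together with the elementary closure fact that a finite sum of positive regularly varying functions belongs to $\mcalr_{-\rho}$ where $\rho$ is the smallest index present (the slowest-decaying term dominates, and positivity precludes cancellation).

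For the first bullet take $\tail_m\in\mcalr_{-\alpha}$ with $\alpha\in(0,\infty)$, the endpoints $\alpha\in\{0,\infty\}$ following as limits. The leading term satisfies $\tail_m(a_K t)\sim a_K^{-\alpha}\tail_m(t)\in\mcalr_{-\alpha}$. For $k\ge 1$ the range $[a_k t,a_{k+1}t)$ has a strictly positive left endpoint, so splitting as $\int_0^{a_{k+1}t}-\int_0^{a_k t}$ and invoking Karamata — the truncated form $\int_0^x s^{\alpha_k}\,dF_m(s)\sim\frac{\alpha}{\alpha_k-\alpha}x^{\alpha_k}\tail_m(x)$ when $\alpha<\alpha_k$, the tail form $\int_x^\infty s^{\alpha_k}\,dF_m(s)\sim\frac{\alpha}{\alpha-\alpha_k}x^{\alpha_k}\tail_m(x)$ when $\alpha>\alpha_k$ — shows the integral is asymptotic to a positive constant times $t^{\alpha_k}\tail_m(t)$; hence $J_k$ is of order $\tail_m(t)$ and lies in $\mcalr_{-\alpha}$ irrespective of the sign of $\alpha-\alpha_k$. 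The decisive term is $J_0$, present only when $\beta_0>0$, where $a_0=0$ makes the integral a genuine truncated moment $\int_0^{a_1 t}s^{\alpha_0}\,dF_m(s)$: for $\alpha>\alpha_0$ it converges to the finite constant $\E(m^{\alpha_0})$, so $J_0\sim c\,t^{-\alpha_0}\in\mcalr_{-\alpha_0}$, whereas for $\alpha<\alpha_0$ Karamata gives $J_0\in\mcalr_{-\alpha}$. The smallest index across all summands is therefore exactly $\min(\alpha_0,\alpha)$ — and when $\beta_0=0$ the term $J_0$ is absent, i.e.\ $\alpha_0=+\infty$ — yielding $\tail_P\in\mcalr_{-\rho}$ with $\rho=\min(\alpha_0,\alpha)$.

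For the second bullet, $\bar H\in\mcalr_{-\infty}$ decays faster than every power and $\tail_m=o(\bar H)$, so $\tail_m(s)=o(s^{-\gamma})$ for all $\gamma$ and every moment $\E(m^\gamma)$ is finite. If $\beta_0>0$, the integral in $J_0$ converges to $\E(m^{\alpha_0})\in(0,\infty)$, giving $J_0\sim c\,t^{-\alpha_0}\in\mcalr_{-\alpha_0}$, while $\tail_m(a_K t)$ and each $J_k$ ($k\ge1$) decay faster than any power and are $o(t^{-\alpha_0})$; hence $\tail_P\in\mcalr_{-\alpha_0}$. If $\beta_0=0$, continuity forces $a_1=A_0$, and the smallest scale among the surviving terms is the left endpoint $A_0 t$ of $J_1$. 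The crux is the concentration estimate $\int_x^{cx}s^\gamma\,dF_m(s)\sim x^\gamma\tail_m(x)$ as $x\to\infty$, valid for rapidly varying $\tail_m$: integrating by parts, the upper-endpoint contribution is $o(x^\gamma\tail_m(x))$ and the residual $\int_x^{cx}s^{\gamma-1}\tail_m(s)\,ds=o(x^\gamma\tail_m(x))$ because a monotone rapidly varying tail satisfies $\int_x^\infty\tail_m(s)\,ds=o(x\,\tail_m(x))$. Taking $x=A_0 t$ and using $A_0=A_1R_1^{\beta_1}$ with $\beta_1\alpha_1=d$ gives $J_1(t)\sim(R_1^d/R_\infty^d)\,\tail_m(A_0 t)$, while all remaining terms carry strictly larger scales and are negligible; thus $\tail_P(t)\sim(R_1^d/R_\infty^d)\,\tail_m(A_0 t)$, i.e.\ tail-equivalence (the degenerate case $K=1$ gives $\tail_P(t)=\tail_m(A_0 t)$ outright).

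The hardest step is this concentration estimate in the rapidly varying regime, since Karamata's theorem covers indices in $\R$ but not $-\infty$; I would justify $\int_x^\infty\tail_m(s)\,ds=o(x\,\tail_m(x))$ through the representation/de~Haan theory for rapidly varying functions, using the monotonicity of the survival function and the comparison function $\bar H$ to control the larger-scale terms uniformly. A secondary subtlety is the boundary case $\alpha=\alpha_0$ in the first bullet, where slowly varying logarithmic factors enter the Karamata asymptotics but leave the index unchanged, so that $\tail_P\in\mcalr_{-\alpha_0}$ still holds.
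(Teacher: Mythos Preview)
Your proposal is correct and follows essentially the same route as the paper: start from the decomposition $\tail_P(t)=\tail_m(a_Kt)+\sum_{k\ge k_0}J_k(t)$ of \propref{prop:tailP}, identify the regular-variation index of each $J_k$ via Karamata, and read off $\rho=\min(\alpha_0,\alpha)$ from the dominant summand. The only cosmetic difference is that the paper first passes to the density $f_m$ via the monotone density theorem and then applies Karamata to $\int x^{\alpha_k-\alpha-1}L(x)\,dx$, whereas you apply the truncated-moment form of Karamata directly to $\int s^{\alpha_k}\,dF_m(s)$; the paper in fact switches to your formulation for the $k=0$ term (citing \cite[Prop.~A3.8]{Embrechts1997}), so the two are equivalent. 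Your sketch of the second bullet and of $\alpha=\infty$ goes beyond what the paper actually proves here---it defers those cases to a longer version---and your identification of the concentration estimate $\int_x^{cx}s^{\gamma}\,dF_m(s)\sim x^{\gamma}\tail_m(x)$ for rapidly varying tails as the crux is on target.

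One caution for the $\beta_0=0$ branch of the second bullet: the hypothesis is $\tail_m=o(\bar H)$ with $\bar H\in\mcalr_{-\infty}$, which does not by itself force $\tail_m$ to be rapidly varying in the sense $\tail_m(cx)/\tail_m(x)\to 0$ that your concentration argument uses. Monotonicity of $\tail_m$ plus the comparison with $\bar H$ is enough to kill the larger-scale terms $J_k$, $k\ge 2$, and $\tail_m(a_Kt)$ relative to $\bar H(A_0t)$, but to get the precise asymptotic $J_1(t)\sim (R_1/R_\infty)^d\,\tail_m(A_0t)$ you should make explicit why the integral remainder $\int_{A_0t}^{a_2t}s^{\alpha_1-1}\tail_m(s)\,ds$ is $o\bigl((A_0t)^{\alpha_1}\tail_m(A_0t)\bigr)$ rather than merely $O(\cdot)$; an $\varepsilon$-splitting of the interval together with monotonicity handles this once you argue $\tail_m((1+\varepsilon)x)/\tail_m(x)\to 0$, which is where the rapidly-varying structure is genuinely needed.
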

\begin{proof} See \aref{appx:ProofTailEquiv}. \end{proof}

\thmref{thm:tailequiv} shows that the tail behavior of the wireless link depends not only on whether path loss is bounded or not, but also on the tail behavior of the fading. More precisely, a key implication of \thmref{thm:tailequiv} is that path loss and fading have interchangeable effects on the tail behavior of the wireless link. This can also be shown using Breiman's Theorem \cite{Bre65} and results from large deviation and product distributions. In particular, when the fading distribution is a regularly varying function, the wireless link is also regularly varying regardless of the path loss function's boundedness. For lighter-tailed fading, the regular variation property of the wireless link is solely imposed by the path loss singularity.

More importantly, \thmref{thm:tailequiv} is a general result and covers all tail behaviors for the fading: case (1) covers the heaviest tails (i.e. $\mcalr_{-\alpha}$ with $0 \leq \alpha < \infty$, e.g. Pareto distributions), as well as the moderately heavy tails (i.e. the class $\mcalr_{-\infty}$, e.g. exponential, normal, lognormal, Gamma distributions). Case (2) covers all remaining tails (e.g. truncated distributions). 
Therefore, for any realistic statistical model and distribution of fast fading and shadowing, \thmref{thm:tailequiv} enables us to characterize the tail behavior of the wireless link, which is essential to understand the behavior of the interference, the maximum received power, and their asymptotic relationship. In wireless communications, the signal distribution $\tail_m$ often involves lognormal or Gamma shadowing and Rayleigh fading, which all belong to the class $\mcalr_{-\infty}$, and the path loss is bounded, thus $\tail_P \in \mcalr_{-\infty}$. As a result, it can be shown that in most relevant cases in wireless UDNs, $\tail_P$ belongs to the maximum domain of attraction of a Gumbel distribution \cite{Nguyen2010a}.

Finally, the above result generalizes prior results: \cite{Haenggi2009} showed that the interference is tail-equivalent with the fading if path loss is bounded and if $\E(m) < \infty$, which is not applicable for $\tail_m \in \mcalr_{-\alpha}$, $\alpha \in [0,1]$. \cite[Chap.~8]{Nguyen2011} showed that under lognormal fading, $\tail_P$ is regularly varying for unbounded path loss, and behaves like a lognormal tail for bounded path loss.


The following result, which is a direct consequence of \thmref{thm:tailequiv}, can be provided in order to better understand the signal power scaling and the interplay between path loss function boundedness and fading.
\begin{cor}\label{cor:classifyTail}
    \begin{itemize}
        \item $\tail_P \in \mcalr_0$ if and only if $\tail_m \in \mcalr_0$,
        \item $\tail_P \in \mcalr_{-\alpha}$ with $\alpha \in (0,1)$ if $\beta_0 > d$ or $\tail_m \in \mcalr_{-\alpha}$,
        \item $\tail_P \in \mcalr_{-\alpha}$ with $\alpha > 1$ if $0 < \beta_0 < d$ and $\tail_m \in \mcalr_{-\rho}$ with $\rho > 1$ or $\tail_m = o(\bar{H})$ with $\bar{H} \in \mcalr_{-\infty}$.
        \item $\tail_P = o(\bar{H})$ with $\bar{H} \in \mcalr_{-\infty}$ if $\beta_0 = 0$ and $\tail_m = o(\bar{H})$.
    \end{itemize}
\end{cor}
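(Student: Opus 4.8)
The plan is to treat the corollary as a mechanical specialization of \thmref{thm:tailequiv}: in each bullet I substitute the hypothesized regime into the theorem and read off the regular-variation index, which in the regularly-varying case is $\rho = \min(\alpha_0,\alpha)$. Throughout I use the dictionary $\alpha_0 = d/\beta_0$, so that $\beta_0 > d \iff \alpha_0 \in (0,1)$, $\beta_0 < d \iff \alpha_0 > 1$, and $\beta_0 = 0 \iff \alpha_0 = +\infty$, together with the fact that $\alpha_0 > 0$ whenever $\beta_0 < \infty$.

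For the first bullet (the only equivalence) I argue both directions. The forward implication is immediate: if $\tail_m \in \mcalr_0$, i.e. $\alpha = 0$, then the first part of \thmref{thm:tailequiv} gives $\rho = \min(\alpha_0,0) = 0$ since $\alpha_0 \geq 0$, hence $\tail_P \in \mcalr_0$. For the converse I use the contrapositive together with the exhaustiveness of the two cases of \thmref{thm:tailequiv}, noting that every admissible fading tail is either regularly varying of some index $\alpha \in [0,\infty]$ or satisfies $\tail_m = o(\bar H)$ with $\bar H \in \mcalr_{-\infty}$. If $\tail_m \in \mcalr_{-\alpha}$ with $\alpha > 0$, then $\rho = \min(\alpha_0,\alpha) > 0$ (both terms positive), so $\tail_P \notin \mcalr_0$; if instead $\tail_m = o(\bar H)$, then by the second part of the theorem $\tail_P$ is either tail-equivalent to $\tail_m(A_0\cdot)$ (for $\beta_0 = 0$, hence rapidly varying and not in $\mcalr_0$) or lies in $\mcalr_{-\alpha_0}$ with $\alpha_0 > 0$ (for $\beta_0 > 0$); in all cases $\tail_P \notin \mcalr_0$, establishing the contrapositive.

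Bullets two and three are then pure index arithmetic. For the third, $0 < \beta_0 < d$ gives $\alpha_0 > 1$; if $\tail_m \in \mcalr_{-\rho}$ with $\rho > 1$ the theorem yields index $\min(\alpha_0,\rho) > 1$, while if $\tail_m = o(\bar H)$ the second part gives $\tail_P \in \mcalr_{-\alpha_0}$ with $\alpha_0 > 1$; either way $\tail_P \in \mcalr_{-\alpha}$ for some $\alpha > 1$. For the second bullet the same substitution shows that $\tail_m \in \mcalr_{-\alpha}$ with $\alpha \in (0,1)$ forces $\rho = \min(\alpha_0,\alpha) \in (0,1)$ regardless of $\beta_0$, and that $\beta_0 > d$ (so $\alpha_0 \in (0,1)$) forces $\tail_P \in \mcalr_{-\alpha}$ for some $\alpha \in (0,1)$ for any fading outside $\mcalr_0$ (via index $\min(\alpha_0,\alpha)$ in the regularly-varying case and index $\alpha_0$ via the second part in the $o(\bar H)$ case). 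The one point to flag is that the $\beta_0 > d$ branch must be read under the standing exclusion of the slowly-varying regime already isolated by the first bullet: if $\tail_m \in \mcalr_0$ then $\rho = 0$ and the conclusion would fail, so here one assumes $\tail_m \notin \mcalr_0$ (equivalently $\tail_P \notin \mcalr_0$).

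The final bullet is the only one not reducible to the $\min$ formula, and it is where I expect the only genuine (if minor) work. Here $\beta_0 = 0$ and $\tail_m = o(\bar H)$ with $\bar H \in \mcalr_{-\infty}$, so the second part of \thmref{thm:tailequiv} gives that $\tail_P$ is tail-equivalent to $\tail_m(A_0\cdot)$, i.e. $\tail_P(t) \sim c\,\tail_m(A_0 t)$ for some $c \in (0,\infty)$. Setting $\bar H'(t) = \bar H(A_0 t)$ and using that $\mcalr_{-\infty}$ is closed under scaling of the argument (so $\bar H' \in \mcalr_{-\infty}$), I factor $\tail_P(t)/\bar H'(t) = [\tail_P(t)/\tail_m(A_0 t)]\cdot[\tail_m(A_0 t)/\bar H'(t)] \to c\cdot 0 = 0$, giving $\tail_P = o(\bar H')$ as required. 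The main obstacle in the whole proof is thus not any single computation but ensuring the dichotomy of \thmref{thm:tailequiv} is invoked exhaustively in the converse of the first bullet, and that the degenerate $\mcalr_0$ regime is consistently excluded from the sufficient-condition bullets.
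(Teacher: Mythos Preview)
Your proposal is correct and matches the paper's approach: the paper gives no explicit proof for this corollary, stating only that it ``is a direct consequence of \thmref{thm:tailequiv}'', and your argument is precisely the mechanical specialization of that theorem (index arithmetic via $\rho=\min(\alpha_0,\alpha)$, plus the tail-equivalence clause for the last bullet) that this phrase invites. Your care in handling the converse of the first bullet via exhaustiveness, and in flagging the implicit $\tail_m\notin\mcalr_0$ exclusion in the second bullet, goes slightly beyond what the paper spells out but is entirely consistent with it.
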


\section{Scaling Laws}\label{s:PerfLim}
In this section, we provide the main results of this paper, namely the SINR, coverage probability and capacity scaling, when the network density is asymptotically large. Using results from Section \ref{s:TailBeh}, we investigate and provide insights on the performance limits of network densification.

First, we start by showing that in sparse networks, the signal quality improves for increased node density.
\begin{lem}\label{lem:YatLdaZero}
    Let $0 \leq \lambda_1 < \lambda_2$. If $W > 0$, then $Y(\lambda_2) \overset{st}{>} Y(\lambda_1)$ as $\lambda_2 \to 0^+$.
\end{lem}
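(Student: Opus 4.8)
The plan is to exploit the Poisson structure of $\Phi$ together with the boundedness of the network domain. Note first that densification is not monotone at the level of configurations: adding a node either raises the maximum received power $M$ (if its power exceeds the incumbent's) or merely adds to the interference $I-M$, so $Y=M/(I+W-M)$ can move either way when a point is inserted. This is why the statement is asymptotic ($\lambda_2\to 0^+$) and why a plain superposition coupling does not suffice. Since $R_{\infty}<\infty$, the domain has finite volume $V$, so the node count $N$ is Poisson with mean $\lambda V$; conditioned on $N=n$ the nodes are i.i.d.\ uniform and their received powers are i.i.d.\ with law $F_P$. Writing $q_n(y)=\Pb(Y\geq y\mid N=n)$, one has $q_0(y)=0$ (no serving cell) and, crucially, $q_1(y)=\tail_P(yW)$ because a single node gives $Y=P/W$. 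Hence
\begin{equation*}
    f(\lambda)\defeq\Pb(Y(\lambda)\geq y)=e^{-\lambda V}\sum_{n\geq 1}\frac{(\lambda V)^n}{n!}\,q_n(y),
\end{equation*}
and the goal is to show that $f$ is strictly increasing in $\lambda$ near $0$, uniformly in $y$, which is exactly $Y(\lambda_2)\overset{st}{>}Y(\lambda_1)$ for small $\lambda_1<\lambda_2$.

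The key estimate, and the point where $W>0$ enters decisively, is a uniform bound on $q_n$. Since $I+W-M\geq W>0$, the event $\{Y\geq y\}$ forces $M\geq yW$, i.e.\ at least one received power exceeds $yW$; a union bound then yields $q_n(y)\leq n\,\tail_P(yW)$ for every $n$ and every $y\geq 0$. This is what lets me control the higher-order ($N\geq 2$) contributions uniformly in $y$ rather than only pointwise.

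With this in hand I would differentiate the (everywhere-convergent) series term by term to obtain
\begin{equation*}
    f'(\lambda)=e^{-\lambda V}\,V\sum_{n\geq 0}\frac{(\lambda V)^n}{n!}\,\bigl(q_{n+1}(y)-q_n(y)\bigr),
\end{equation*}
isolate the $n=0$ term, equal to $q_1(y)-q_0(y)=\tail_P(yW)$, and absorb the rest using $|q_{n+1}(y)-q_n(y)|\leq (n+1)\tail_P(yW)$. Summing the resulting series gives a lower bound of the form $f'(\lambda)\geq e^{-\lambda V}\,V\,\tail_P(yW)\,\bigl(2-(1+\lambda V)e^{\lambda V}\bigr)$, in which the factor $\tail_P(yW)$ carries all the $y$-dependence. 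Choosing $\lambda^\ast>0$ with $(1+\lambda^\ast V)e^{\lambda^\ast V}<2$ makes the bracket positive, so $f'(\lambda)>0$ for all $\lambda\in[0,\lambda^\ast]$ and all $y$ with $\tail_P(yW)>0$. Thus $f$ is strictly increasing on $[0,\lambda^\ast]$ for each such $y$, establishing $Y(\lambda_2)\overset{st}{>}Y(\lambda_1)$ whenever $\lambda_1<\lambda_2\leq\lambda^\ast$ (the degenerate range $y\geq\sup P/W$, where both probabilities vanish, being excluded as usual in the ordering).

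I expect the main obstacle to be precisely this uniformity in $y$. The naive route only gives $f'(0)=V\tail_P(yW)>0$, which secures monotonicity on a $y$-dependent neighbourhood of $0$ that shrinks as $y\to\infty$, and is therefore useless for producing a single threshold $\lambda^\ast$ valid for all $y$ at once. The bound $q_n(y)\leq n\,\tail_P(yW)$---available only because $W>0$ keeps the noise floor away from zero---is what forces the higher-order terms to decay at the same rate in $y$ as the leading term, and so rescues the uniform conclusion.
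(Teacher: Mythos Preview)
Your argument is correct and considerably more explicit than the paper's. The paper proceeds heuristically: as $\lambda\to 0^+$ the residual interference $I-M$ is $o(W)$, so $Y\approx M/W$, and then one invokes the stochastic monotonicity of $M$ in $\lambda$ (by PPP superposition) to conclude. That route is short and intuitive but, as written, compares strict inequalities between quantities that are both tending to limits, which leaves the existence of a single density threshold implicit. Your approach is genuinely different: you expand $\Pb(Y(\lambda)\geq y)$ as a Poisson mixture over the node count, differentiate the power series in $\lambda$, and control the $n\geq 1$ corrections via the union bound $q_n(y)\leq n\,\tail_P(yW)$, which is exactly where $W>0$ enters. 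This buys you an explicit, $y$-independent threshold $\lambda^\ast$ determined by $(1+\lambda^\ast V)e^{\lambda^\ast V}<2$, and hence a clean monotonicity statement on $[0,\lambda^\ast]$. The paper's argument is lighter to read and foregrounds the physical picture ($\mathrm{SINR}\approx\mathrm{SNR}$ in sparse networks); yours is self-contained, avoids any limit interchange, and yields a quantitative bound the paper does not provide.
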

\begin{proof}
    As $\lambda \to 0^+$, we have $(I(\lambda)-M(\lambda)) = o(W)$ almost surely. Thus, for $y > 0$,
	\begin{align*}	    
	    \lim_{\lambda_2 \downarrow 0}\Pb\pren{Y(\lambda_2) \geq y} & = \lim_{\lambda_2 \downarrow 0}\Pb\pren{M(\lambda_2) \geq y W} \\
	    & \overset{(a)}{>} \lim_{\lambda_1 < \lambda_2 \downarrow 0}\Pb\pren{M(\lambda_1) \geq y W} \\
	    & = \lim_{\lambda_1 < \lambda_2 \downarrow 0}\Pb\pren{Y(\lambda_1) \geq y},
	\end{align*} 
    where note that $(a)$ is intuitively evident, but a formal proof can be easily obtained using for example \cite[Prop.~2.4.2]{Baccelli2009}.
\end{proof}

\subsection{SINR Scaling}
We provide here the scaling of the received SINR under strongest BS association in the asymptotically large node density regime. 

\begin{thm}\label{thm:YatLdaInfty} Under the multi-slope path loss model and general fading, as $\lambda \to \infty$, the received SINR behaves as
\begin{enumerate}
    \item $Y \overset{p}{\to} \infty$ if $\tail_P \in \mcalr_0$.
    \item $Y \overset{d}{\to} D$ if $\tail_P \in \mcalr_{-\alpha}$ with $0 < \alpha < 1$, where $D$ has a non-degenerate distribution.
    \item $Y \overset{a.s}{\to} 0$ if $\tail_P \in \mcalr_{-\alpha}$ with $\alpha  > 1$ or $\tail_P = o(\bar{H})$ with $\bar{H} \in \mcalr_{-\infty}$.
\end{enumerate}
\end{thm}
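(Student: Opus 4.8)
The plan is to reduce the whole statement to the asymptotics of the single ratio $R := M/I$ of the maximum received power to the total received power, and then read off $Y$ from a continuous map. Starting from \eref{eq:YMI} and writing $V$ for the (finite) volume of the network domain, I would rewrite
\begin{equation*}
    Y = \frac{M}{I - M + W} = \frac{M/I}{1 - M/I + W/I} = \frac{R}{(1-R) + W/I}, \qquad R := \frac{M}{I}.
\end{equation*}
Since the mean number of nodes $\lambda V$ diverges and every node contributes strictly positive power, $I \overset{p}{\to}\infty$ (indeed $I \geq M$, and in the light-tailed regime even $I \sim \lambda V\,\E(P)$), so $W/I \to 0$. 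The limit of $Y$ is therefore dictated by $R$ through $R \mapsto R/(1-R)$: the bound $0 \leq Y \leq R/(1-R)$ forces $Y \to 0$ when $R \to 0$; the identity above forces $Y \to \infty$ when $R \to 1$ (numerator $\to 1$, denominator $\to 0^+$); and a non-degenerate limit of $R$ in $(0,1)$ transfers to $Y$ by Slutsky. It thus remains to pin down $R$ in each regime.

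For the regularly varying cases I would use the point-process description of the powers. By the marking theorem $\brac{P_i}$ is a PPP on $(0,\infty)$ with intensity $\nu_\lambda([t,\infty)) = \lambda V\,\tail_P(t)$. When $\tail_P \in \mcalr_{-\alpha}$ with $\alpha > 0$, pick the scaling $b_\lambda \to \infty$ by $\lambda V\,\tail_P(b_\lambda) \to 1$; then for $x>0$ the intensity of $\sum_i \delta_{P_i/b_\lambda}$ on $[x,\infty)$ equals $\lambda V\,\tail_P(b_\lambda)\,\tail_P(b_\lambda x)/\tail_P(b_\lambda) \to x^{-\alpha}$, so this process converges vaguely on $(0,\infty]$ to a Poisson process $N_\infty$ with intensity $\alpha x^{-\alpha-1}\,\d{x}$ (the standard regular-variation convergence behind Fréchet maxima and stable sums). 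The functionals ``maximum'' and, for $\alpha<1$, ``total sum'' are then read off from $N_\infty$.

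\emph{Case (2)}, $0<\alpha<1$: here $\int_0^1 x\,\alpha x^{-\alpha-1}\,\d{x} = \tfrac{\alpha}{1-\alpha}<\infty$, so the sum $\Sigma = \sum_i \xi_i$ of the points of $N_\infty$ is finite a.s., while $M_\infty = \max_i \xi_i$ is Fr\'echet$(\alpha)$; I would prove the \emph{joint} convergence $(M/b_\lambda, I/b_\lambda) \tod (M_\infty, \Sigma)$, apply the continuous mapping theorem to get $R \tod M_\infty/\Sigma \in (0,1)$, and combine with $W/I \overset{p}{\to} 0$ via Slutsky to obtain $Y \tod D := M_\infty/(\Sigma - M_\infty)$, which is non-degenerate since $\Sigma - M_\infty > 0$ a.s. \emph{Case (1)}, $\tail_P \in \mcalr_0$: the tail is so heavy that the largest term dominates, and I would invoke the single-big-jump principle $R = M/I \overset{p}{\to} 1$ (equivalently $(I-M)/M \overset{p}{\to} 0$), valid precisely for slowly varying tails, giving $Y \overset{p}{\to}\infty$. \emph{Case (3)}: if $\tail_P \in \mcalr_{-\alpha}$ with $\alpha>1$, or $\tail_P = o(\bar{H})$ with $\bar{H}\in\mcalr_{-\infty}$, then $\E(P)<\infty$; the strong law for the Poisson sum gives $I/(\lambda V) \overset{a.s}{\to} \E(P)$ while $M$ is of smaller order ($\max_{i\leq n} P_i/n \to 0$ a.s.\ for summands of finite mean), so $R \overset{a.s}{\to} 0$ and the squeeze $0 \leq Y \leq R/(1-R)$ yields $Y \overset{a.s}{\to} 0$.

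The main obstacle is Case (2): upgrading the marginal limits to \emph{joint} convergence of (maximum, total sum), and checking that the sum functional is a.s.\ continuous for $N_\infty$ despite its infinitely many small points accumulating at the origin. The decisive estimate is that the truncated small-point mass $\int_0^{\eps} x\,\alpha x^{-\alpha-1}\,\d{x} = \tfrac{\alpha}{1-\alpha}\eps^{1-\alpha} \to 0$, which legitimizes passing the sum through the vague limit uniformly in the truncation and is exactly where the restriction $\alpha<1$ is used. For the a.s.\ assertions in Case (3) I would also fix one coupling of the networks across $\lambda$ --- a PPP on $D \times (0,\infty)$ with the density-$\lambda$ network its restriction to second-coordinate marks $\leq \lambda$ --- so that $M(\lambda)$ and $I(\lambda)$ are monotone in $\lambda$ and the strong law transfers to the continuous parameter.
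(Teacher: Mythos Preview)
Your proposal is correct and follows the same overall strategy as the paper: rewrite $Y$ in terms of the ratio $R = M/I$, observe that $W/I \to 0$ as $\lambda \to \infty$, and then determine the limiting behavior of $R$ in each of the three regimes via the continuous map $R \mapsto R/(1-R)$. The only real difference is how the asymptotics of $M/I$ are obtained. The paper does not construct them: it simply cites \cite{Maller1984} for $I/M \overset{p}{\to} 1$ under $\tail_P \in \mcalr_0$, \cite{Bingham1981} for the non-degenerate limit of $M/I$ when $\alpha \in (0,1)$, and \cite{OBrien1980} for $M/I \overset{a.s}{\to} 0$ when $\E(P) < \infty$. Your point-process convergence argument (vague convergence of the rescaled power process to a Poisson process with intensity $\alpha x^{-\alpha-1}\,\d{x}$, joint continuity of the max and sum functionals for $\alpha < 1$, and the SLLN for the finite-mean case) is exactly one standard route to proving those cited results, so your proof is a more self-contained version of the same argument. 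Your extra care in Case~(3) --- fixing a monotone coupling across $\lambda$ so that the a.s.\ convergence statement is well-posed for a continuous parameter --- addresses a point the paper's proof leaves implicit.
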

\begin{proof}
    For $y \geq 0$,
    \begin{align*}
	    \Pb(Y > y) = \Pb\pren{\frac{I - M + W}{M} < \frac{1}{y}} \to \Pb\pren{\frac{I}{M} - 1 < \frac{1}{y}},
    \end{align*}
    as $\lambda \to \infty$ since $W$ is finite and $m_i$ is not identical to 0.

    If $\tail_P \in \mcalr_0$, $I/M \overset{p}{\to} 1$ due to \cite{Maller1984}. Thus, $\forall y$
    \[\Pb(Y > y) = \Pb\pren{(I/M) - 1 < y^{-1}} = 1, \quad \text{as }\lambda \to \infty.\]

    If $\tail_P \in \mcalr_{-\alpha}$ with $0 < \alpha < 1$, $M/I \overset{d}{\to} R$ as $\lambda \to \infty$ where $R$ has a non-degenerate distribution \cite{Bingham1981}. As a result,
    \[\Pb(Y > y) = \Pb\pren{(I/M) < 1 + y^{-1}} \to D, \quad \text{as }\lambda \to \infty,\]
    where $D$ is a non-degenerate distribution.

    If $\tail_P \in \mcalr_{-\alpha}$ with $\alpha > 1$ or $\tail_P = o(\bar{H})$ with $\bar{H} \in \mcalr_{-\infty}$, we have $\E(P) < \infty$. Hence, $M/I \overset{a.s}{\to} 0$ due to \cite{OBrien1980}. Moreover, $\forall y \in (0,\infty)$
    \begin{equation*}
    \Pb\pren{Y > y} = \Pb\pren{\frac{M}{I+W} > \frac{y}{1+y}} \leq \Pb\pren{\frac{M}{I} > \frac{y}{1+y}}.
    \end{equation*}
    Thus, $M/I \overset{a.s}{\to} 0$ leads to $Y \overset{a.s}{\to} 0$ as $\lambda \to \infty$.
\end{proof}

Let us have a closer look at \thmref{thm:YatLdaInfty} and on its implications in the interplay between multi-slope path loss and fading. According to \corref{cor:classifyTail}, $\tail_P \in \mcalr_0$ due to the fact that $\tail_m \in \mcalr_0$. 
Recall that $m$ contains the transmit power and all potential channel powers (including fading). Therefore, $\tail_m \in \mcalr_0$ implies that the channel powers take large values with non negligible probability. As a result, $m$ dominates and compensates the path loss, resulting in maximum power that grows at the same rate as the aggregate interference. This provides a theoretical justification to the fact that network densification always enhances the signal quality $Y$.

\begin{figure}[!t]
	\centering
	\subfigure[$\beta_0 = 3$, $F_m \sim \text{Composite}$]
	{
		\includegraphics[width=0.225\textwidth]{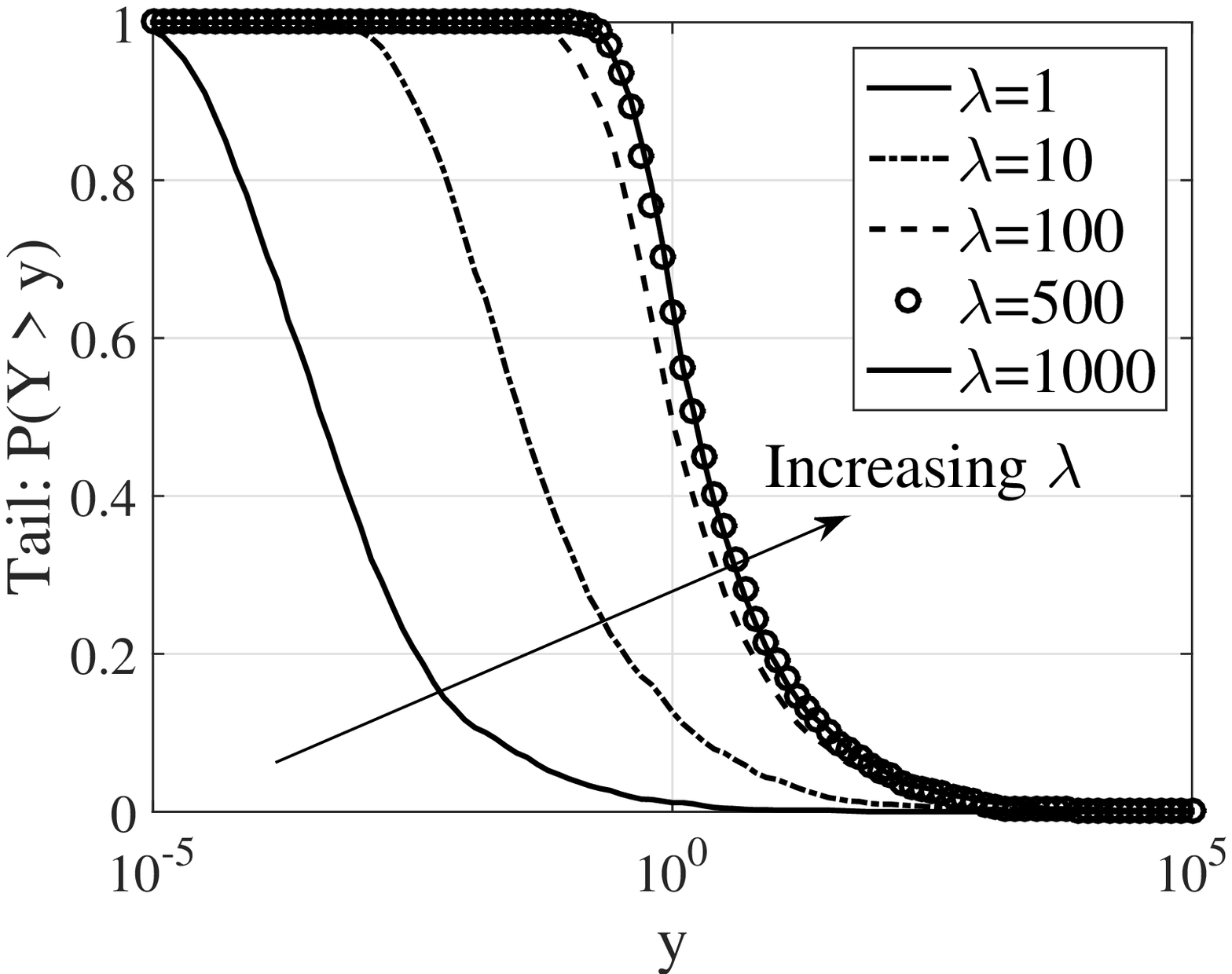}
	}
	\subfigure[$\beta_0 = 0$, $F_m \sim \text{Pareto}(0.5)$]
	{
		\includegraphics[width=0.225\textwidth]{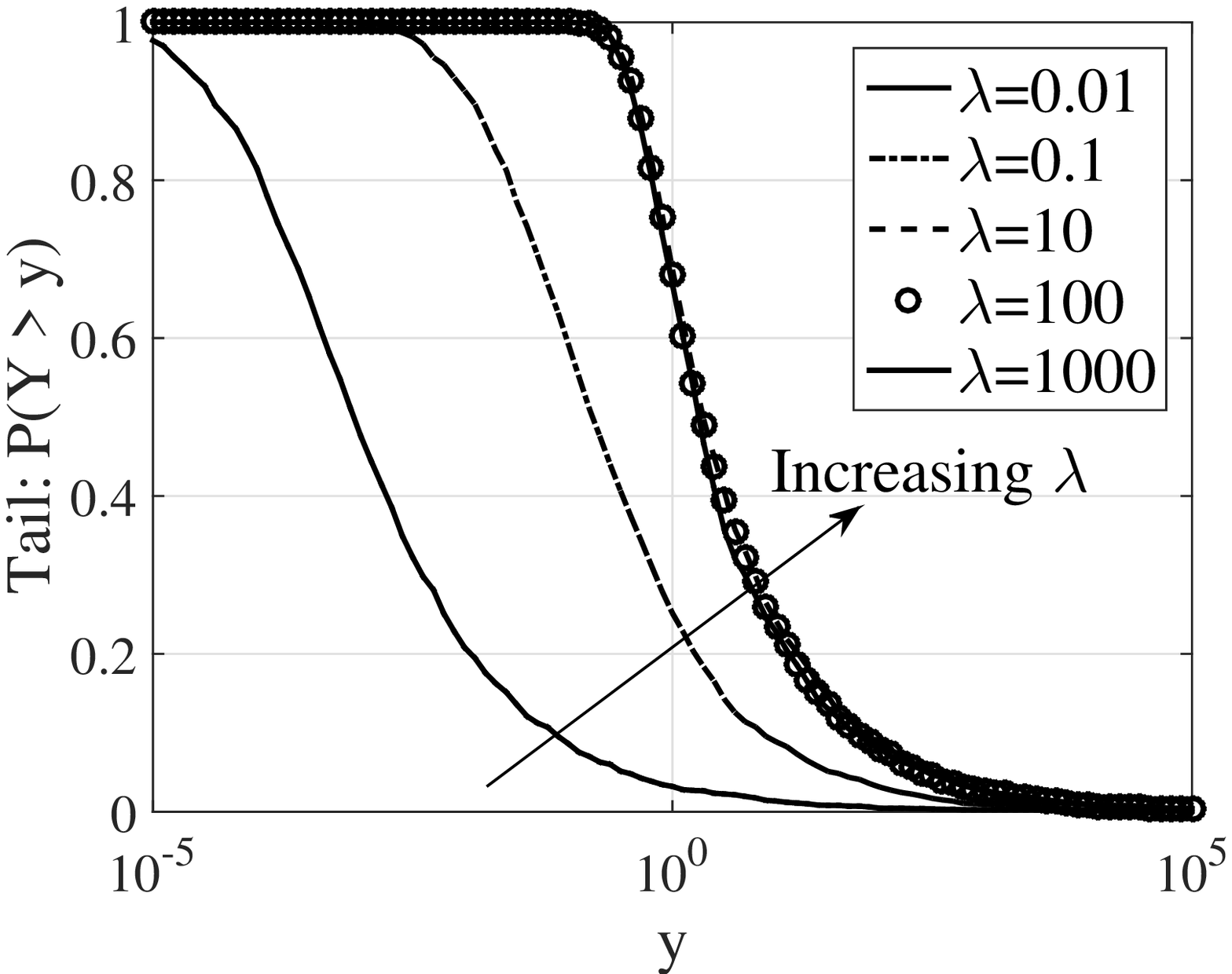}
	}
	\caption{SINR CCDF for $\tail_P \in \mcalr_{-\alpha}$, $\alpha \in (0,1)$. $K =2$, $A_0 = 1$, $\beta_1 = 4$, $R_1 = 10$ m, $R_{\infty} = 40$ km, $d=2$, $\lambda$ in $\text{base stations}/\text{km}^2$.}
	\label{fig:ConvergYtoD}
\end{figure}

When $\tail_P \in \mcalr_{-\alpha}$ with $0 < \alpha < 1$, $Y \overset{d}{\to} D$ implies that the SINR distribution converges to a non-degenerate distribution. Moreover, from \corref{cor:classifyTail}, this convergence is due to either large near-field exponent or heavy-tailed fading. In that case, for any SINR target $y$, the coverage probability $\Pb(Y > y)$ flattens out starting from some large value of node density. This means that further increasing the network density by deploying more BSs does not improve the network performance. In \fref{fig:ConvergYtoD} we show the convergence of $Y$ to a steady distribution for two cases: $\beta_0 > d$ or $\tail_m \in \mcalr_{-\alpha}$ with $\alpha \in (0,1)$, where `\emph{Composite}' represents the case of composite Rayleigh-lognormal fading, which belongs to the rapidly varying class $\mcalr_{-\infty}$, and $\text{Pareto}(\alpha)$ stands for Pareto fading distribution of shape $1/\alpha$ and some scale $\sigma > 0$, i.e.
\begin{equation}\label{eq:Pareto}
	\text{Pareto}(\alpha): \tail_m(x) = (1 + x/\sigma)^{-\alpha}.
\end{equation}

In practically relevant network settings, the path loss is bounded and fading is moderately heavy tailed (i.e. $\mcalr_{-\infty}$ as in the case of lognormal shadowing and Rayleigh fading) or even truncated (i.e. $\tail_m = o(\bar{H})$ where $H \in \mcalr_{-\infty}$). As a result, based on \thmref{thm:tailequiv}, we have that $\tail_P \in \mcalr_{-\infty}$, hence $Y \overset{a.s.}{\to} 0$. In other words, the SINR is proven to be asymptotically decreasing with the infrastructure density. This means that there is a fundamental limit on network densification and the network should not operate in the ultra-dense regime. 
In other words, deploying too many BSs would decrease the network performance due to the fact that the increased signal power cannot compensate for the faster growing aggregate interference. \fref{fig:ConvergYto0} confirms that with $\tail_P \in \mcalr_{-\alpha}$ with $\alpha > 1$ (i.e. either $\alpha_0 = 2/\beta_0 = 2$ or $\tail_m \sim \text{Pareto}(4)$), the tail of $Y$ vanishes and converges to zero when $\lambda$ increases (ultra-dense regime). 
This convergence to zero of the SINR in the ultra-dense regime further emphasizes the importance of local spatial scheduling among BSs since near-field interferers generate much stronger interference than far-field ones.

\begin{figure}[!t]
	\centering
	\subfigure[$\beta_0 = 1$, $F_m \sim \text{Composite}$]
	{
		\includegraphics[width=0.225\textwidth]{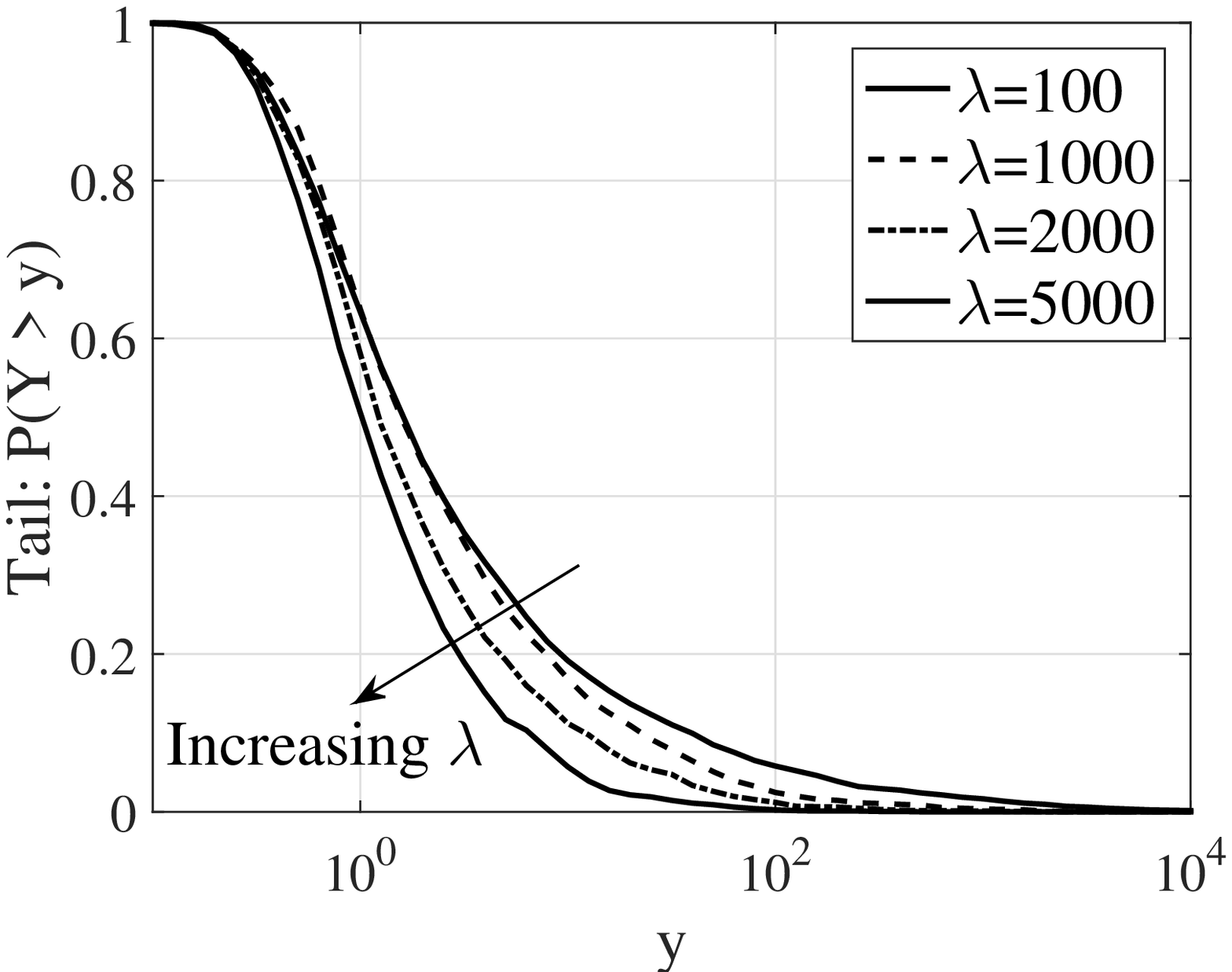}
	}
	\subfigure[$\beta_0 = 0$, $F_m \sim \text{Pareto}(4)$]
	{
		\includegraphics[width=0.225\textwidth]{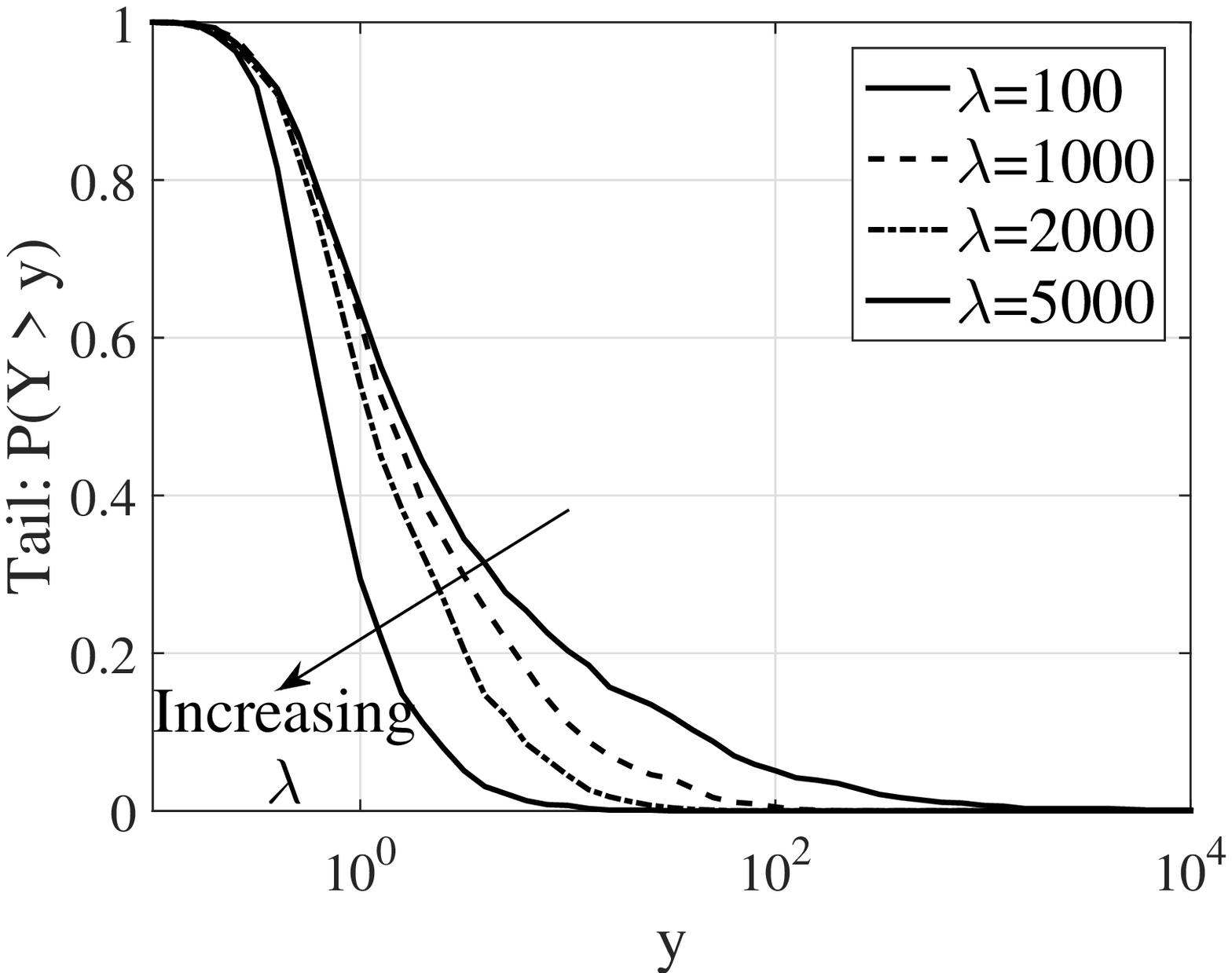}
	}
	\caption{SINR CCDF for $\tail_P \in \mcalr_{-\alpha}$ with $\alpha > 1$. $K =2$, $A_0 = 1$, $\beta_1 = 4$, $R_1= 10$ m, $R_{\infty} = 40$ km, $d=2$, $\lambda$ in $\text{base stations}/\text{km}^2$.}
	\label{fig:ConvergYto0}
\end{figure}

\subsection{Coverage and Capacity Scaling}
To provide a complete characterization of the network performance in the ultra-dense regime, we further study the scaling of coverage probability and capacity. By the definition of the aforementioned two metrics and \thmref{thm:YatLdaInfty}, we obtain the following
result.

\begin{cor}\label{cor:PerfLimit} Under the multi-slope path loss model and general fading, when $\lambda \to \infty$, for fixed $y > 0$, the coverage $\Pp_y(\lambda)$ and the capacity $\Cc(\lambda)$ scale as follows:
    \begin{enumerate}
    \item $\Pp_y(\lambda) \to 1$ and $\Cc(\lambda) \to \infty$ as $\lambda \to \infty$ if $\tail_P \in \mcalr_0$.
    \item $\Pp_y(u\lambda)/\Pp_y(\lambda) \to 1$ and $\Cc(u\lambda)/\Cc(\lambda) \to 1$ for $u > 0$ as $\lambda \to \infty$ if $\tail_P \in \mcalr_{-\alpha}$ with $0 < \alpha < 1$.
    \item $\Pp_y(\lambda) \to 0$ and $\Cc(\lambda) \to 0$ as $\lambda \to \infty$; moreover there exist finite densities $\lambda_p, \lambda_c$ such that $\Pp_y(\lambda_p) > \lim_{\lambda \to \infty}\Pp_y(\lambda)$ and $\Cc(\lambda_c) > \lim_{\lambda \to \infty}\Cc(\lambda)$ if $\tail_P \in \mcalr_{-\alpha}$ with $\alpha > 1$, or $\tail_P = o(\bar{H})$ where $\bar{H} \in \mcalr_{-\infty}$.
    \end{enumerate}
\end{cor}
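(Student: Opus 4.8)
The plan is to read the three regimes directly off the SINR scaling established in \thmref{thm:YatLdaInfty}, viewing the coverage $\Pp_y(\lambda)=\Pb(Y(\lambda)\ge y)$ and the capacity $\Cc(\lambda)=\E(\log(1+Y(\lambda)))$ as functionals of $Y(\lambda)$. The coverage follows at once from the mode of convergence of $Y$, whereas the capacity requires transporting that convergence through the expectation of $\log(1+\cdot)$; for the latter the layer-cake representation $\Cc(\lambda)=\int_0^\infty \frac{\Pb(Y(\lambda)>t)}{1+t}\,dt$ is the natural tool.

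For the coverage I would proceed case by case. In case (1), $Y\overset{p}{\to}\infty$ gives $\Pb(Y(\lambda)\ge y)\to 1$ for every fixed $y$. In case (2), $Y\overset{d}{\to}D$ with $D$ non-degenerate, so $\Pp_y(\lambda)\to\Pb(D\ge y)=:c_y\in(0,1)$ at every continuity point $y$; since $u\lambda\to\infty$ as well, both $\Pp_y(u\lambda)$ and $\Pp_y(\lambda)$ tend to the same $c_y$, and the ratio tends to $1$. In case (3), $Y\overset{a.s}{\to}0$ forces $Y\overset{p}{\to}0$, hence $\Pb(Y(\lambda)\ge y)\to 0$ for every $y>0$.

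For the capacity the three cases are again handled through the representation above. In case (1) the crude bound $\Cc(\lambda)\ge\log(1+t)\,\Pb(Y(\lambda)>t)$ together with $\Pb(Y(\lambda)>t)\to 1$ yields $\liminf_\lambda\Cc(\lambda)\ge\log(1+t)$ for every $t$, whence $\Cc(\lambda)\to\infty$. In case (2), using the identity $\log(1+Y)=-\log\pren{1-M/(I+W)}$ and the fact that $M/(I+W)\overset{d}{\to}R$ on $(0,1)$, one gets $\log(1+Y)\overset{d}{\to}-\log(1-R)$ with $\E(-\log(1-R))\in(0,\infty)$; provided the family $\brac{\log(1+Y(\lambda))}_\lambda$ is uniformly integrable, $\Cc(\lambda)$ converges to this finite positive constant and the ratio tends to $1$. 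In case (3), $\log(1+Y)\to 0$ a.s., and $\Cc(\lambda)\to 0$ is obtained from the same identity by splitting on $\{M\le I/2\}$, where the integrand is bounded by $\log 2$ and vanishes by dominated convergence, and on $\{M>I/2\}$, whose probability vanishes because $M/I\overset{a.s}{\to}0$ in this regime, its contribution being controlled by a Hölder step against a moment bound on $\log(1+M/W)$.

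The hard part is exactly this passage from convergence of $Y$ to convergence of $\E(\log(1+Y))$ in cases (2) and (3): since $\log(1+Y(\lambda))\le\log(1+M(\lambda)/W)$ and $M(\lambda)$ grows without bound as $\lambda\to\infty$, there is no $\lambda$-independent integrable dominating function, so I would establish the required uniform integrability through a uniform tail bound on the ratio $M/I$ (equivalently, a moment bound on $-\log\pren{1-M/(I+W)}$). Finally, the `moreover' claim is the easy part: for every finite $\lambda>0$ there is positive probability of a single near, strong, essentially isolated signal, so $\Pp_y(\lambda)>0$ and $\Cc(\lambda)>0$, whereas both limits are $0$; hence any $\lambda_p,\lambda_c>0$ satisfy the strict inequalities. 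Combining this with \lemref{lem:YatLdaZero}, which shows coverage and capacity are initially increasing in $\lambda$, and with the continuity of $\Pp_y$ and $\Cc$ in $\lambda$, one concludes that both curves rise from $0$, attain a positive maximum at a finite density, and decay back to $0$, i.e. the announced `inverse-U' behavior.
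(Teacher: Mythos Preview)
Your proposal follows essentially the same route as the paper: both reduce everything to \thmref{thm:YatLdaInfty}, read off the coverage from the mode of convergence of $Y(\lambda)$, and handle the capacity through the layer-cake identity $\Cc(\lambda)=\int_0^\infty \Pb(Y(\lambda)>t)/(1+t)\,\d{t}$, with \lemref{lem:YatLdaZero} supplying the ``moreover'' part in case~(3).

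The one substantive difference is that you are more careful than the paper about passing the convergence of $Y$ through the expectation. In case~(2) the paper simply writes $\E(\log(1+Y(u\lambda)))=\E(\log(1+D))=\E(\log(1+Y(\lambda)))$ without any integrability justification; in case~(3) it invokes dominated convergence on the layer-cake integral, but $\Pb(Y(\lambda)>t)/(1+t)\le 1/(1+t)$ is not integrable, so no $\lambda$-free dominating function is exhibited. Your plan to control this via uniform integrability of $-\log(1-M/(I+W))$, equivalently a uniform tail bound on $M/I$, together with the splitting $\{M\le I/2\}\cup\{M>I/2\}$ in case~(3), is exactly the missing ingredient; it is extra work rather than a different approach, and your identification of it as ``the hard part'' is accurate. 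For the ``moreover'' clause, the paper relies solely on \lemref{lem:YatLdaZero} to produce $\lambda_p,\lambda_c$ with strictly positive coverage and capacity; your direct argument (positive probability of an essentially isolated strong signal at any $\lambda>0$) reaches the same conclusion and is a minor variant.
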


\begin{proof}
    Case (1) directly follows \thmref{thm:YatLdaInfty}. 
    For case (2), by \thmref{thm:YatLdaInfty}, we have $Y \overset{d}{\to} D$, where $D$ has a non-degenerate distribution. Then, for constant $u > 0$ and $y > 0$, $\Pb(Y(u\lambda) \geq y) = \Pb(D \geq y) = \Pb(Y(\lambda) \geq y)$ as $\lambda \to \infty$. Similarly, $\E\pren{\log(1 + Y(u\lambda))} = \E\pren{\log(1+D)} = \E\pren{\log(1 + Y(\lambda))}$ as $\lambda \to \infty$.
		
	For case (3), given the conditions and \thmref{thm:YatLdaInfty}, we have $Y \overset{a.s}{\to} 0$. Hence, $\lim_{\lambda \to \infty}\Pp_y(\lambda) = 0$ and by \lemref{lem:YatLdaZero}, $\exists\lambda_p > 0$ s.t. $\Pp_y(\lambda_p) > \Pp_y(0) = 0$. For $\Cc(\lambda)$, we first note that $\Cc(\lambda) = \int_{0}^{\infty}\Pb(Y(\lambda) > y)/(1+y)\d{y}$. Thus, $\lim_{\lambda \to \infty}\Cc(\lambda) = 0$ by Lebesgue's dominated convergence theorem and by the fact that $Y \overset{p}{\to} 0$. For $\epsilon > 0$, $\exists\lambda_c > 0$ such that $\Pb(Y(\lambda_c) \geq \epsilon) > \Pb(Y(0) \geq \epsilon) = 0$ due to \lemref{lem:YatLdaZero}. Thus, $\forall y \in [0, \epsilon]$, $\Pb(Y(\lambda_c) \geq y) > 0$ since $\Pb(Y \geq y)$ is decreasing w.r.t. $y$. In consequence, $\Cc(\lambda_c) = \int_{0}^{\infty}\Pb(Y(\lambda_e) > y)/(1+y)\d{y} > 0$.
\end{proof}

The following observations can be made based on \corref{cor:PerfLimit}: First, both the coverage probability (cf. \fref{fig:ConvergYtoD}) and the capacity (cf. \fref{fig:SaturSE}) saturate (ceiling effect) when the fading is regularly varying with index in $(-1,0)$ or the near-field path loss exponent is large (i.e. $\beta_0 > d$). Second, when the fading is less heavy-tailed ($\tail_m \in \mcalr_{-\alpha}$ with $\alpha > 1$) and when the near-field path loss exponent is smaller (i.e. $\beta_0 < d$), both performance metrics are maximized at a finite network density, then decrease and go to zero in the ultra-dense regime (cf. `\emph{inverse U}' curves in \fref{fig:InvuSE} and \fref{fig:InvuCoverage}). This suggests that there is an optimal point of network density to aim for. Furthermore, fading and path loss have equivalent and somehow interchangeable effects on the network performance. Additionally, the most affecting element of the path loss is its near-field exponent $\beta_0$; bounded path loss function (i.e. $\beta_0 = 0$) is just a special case of the class $\beta_0 < d$. The impact of the near-field exponent was also observed in \cite{Zhang2015} for nearest cell association.

\tref{tab:PerfCases} summarizes the behavior of network performance according to \corref{cor:PerfLimit}.
{\renewcommand{\arraystretch}{1.1}
	\begin{table}[!t]
		\centering
		\caption{Scaling Regimes.}\label{tab:PerfCases}   
		\begin{tabular}{c||c|c|c|c}
			\hline\hline
			\multirow{2}{*}{\parbox[m]{1.2cm}{Scaling Regime}} & \multicolumn{3}{c|}{$\tail_m \in \mcalr_{-\alpha}$} & \multirow{2}{*}{\parbox[m]{0.9cm}{\centering lighter tail}} \\
			\cline{2-4}
			& \parbox[m]{1.2cm}{\centering $\alpha = 0$} & \parbox[m]{1.6cm}{\centering $0 < \alpha < 1$} & \parbox[m]{1cm}{\centering $\alpha > 1$} &  \\
			\hline\hline
			$\beta_0 < d$ & \multirow{2}{*}{\parbox[m]{1.2cm}{\centering $\Pp \to 1$, \\ $\Cc \to \infty$}} & \multirow{2}{*}{saturation} & \multicolumn{2}{c}{inverse U}  \\
			\cline{1-1}\cline{4-5}
			$\beta_0 > d$ &  &  & \multicolumn{2}{c}{saturation}\\
			\hline
		\end{tabular}
	\end{table}
}

\begin{figure}[!t]
	\centering
	\includegraphics[width= 0.42\textwidth]{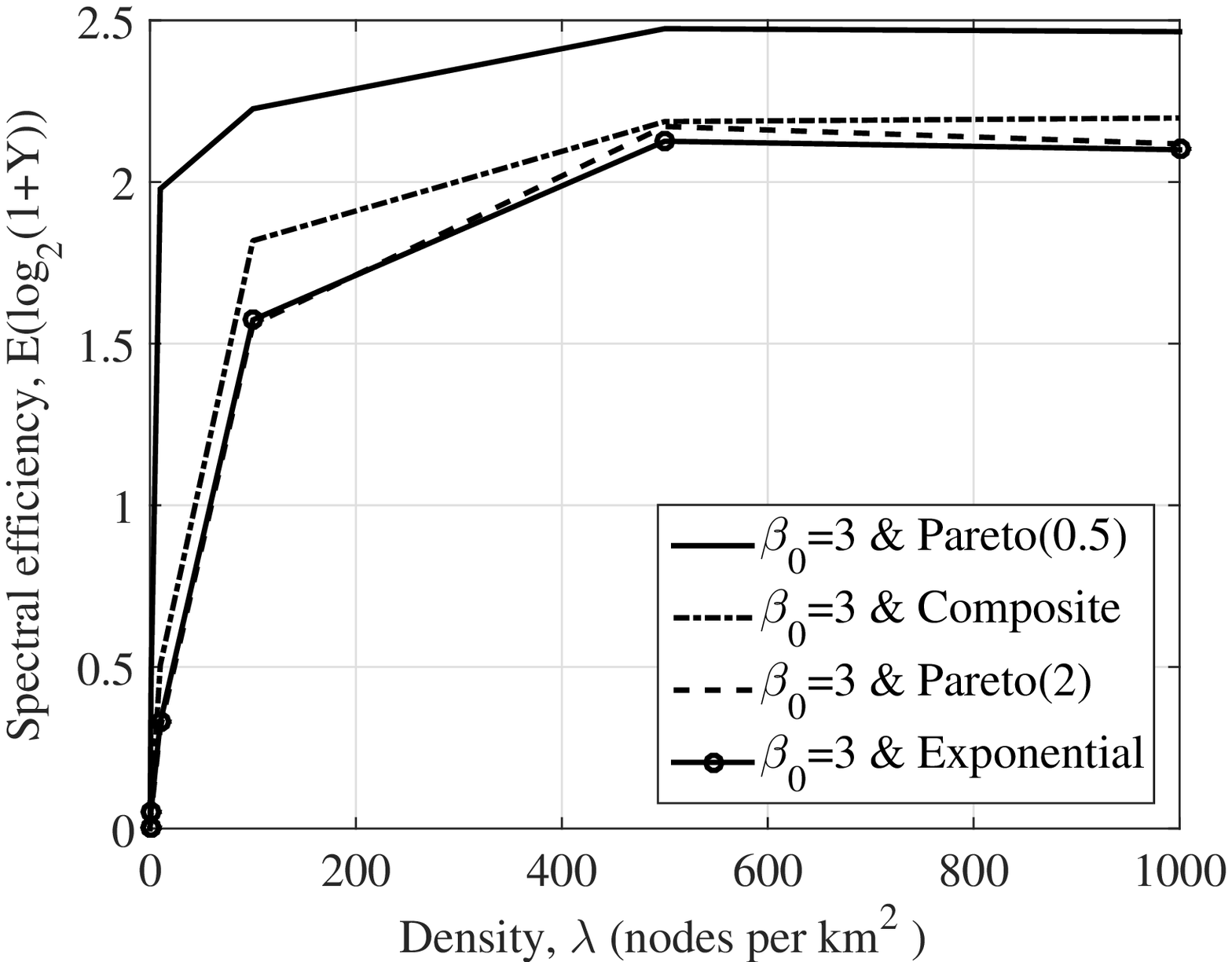}
	\caption{Capacity scaling for $\beta_0 > d$. Here, $d=2$, $K =2$, $A_0 = 1$, $\beta_1 = 4$, $R_1 = 10$m, $R_{\infty} = 40$ km.}
	\label{fig:SaturSE}
\end{figure}

\begin{figure}[!t]
	\centering
	\includegraphics[width= 0.42\textwidth]{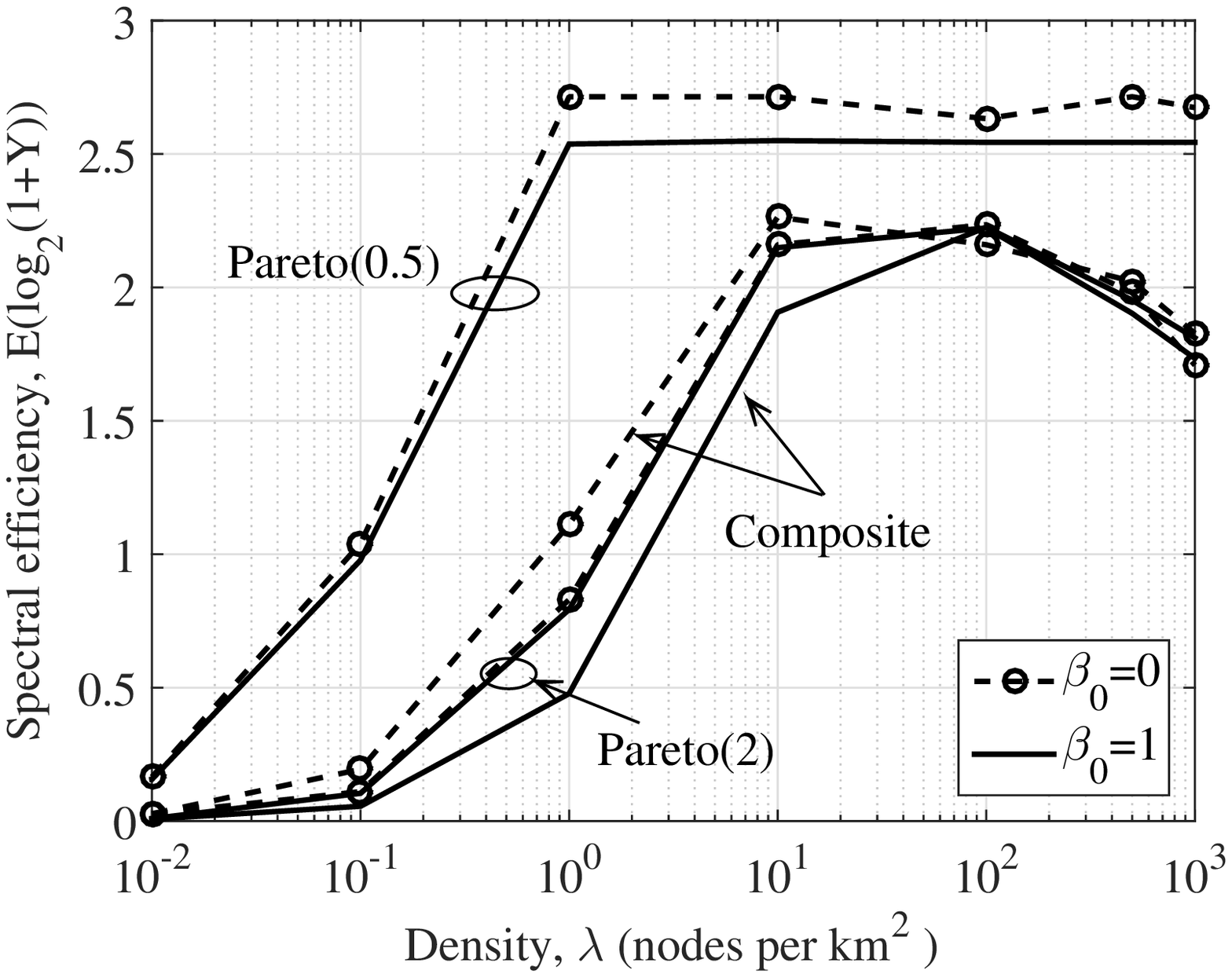}
	\caption{Capacity scaling for $\beta_0 < d$. Here, $d=2$, $K =2$, $A_0 = 1$, $\beta_1 = 4$, $R_1 = 10$ m, $R_{\infty} = 40$ km.}
	\label{fig:InvuSE}
\end{figure}

\begin{figure}[!t]
	\centering
	\includegraphics[width= 0.42\textwidth]{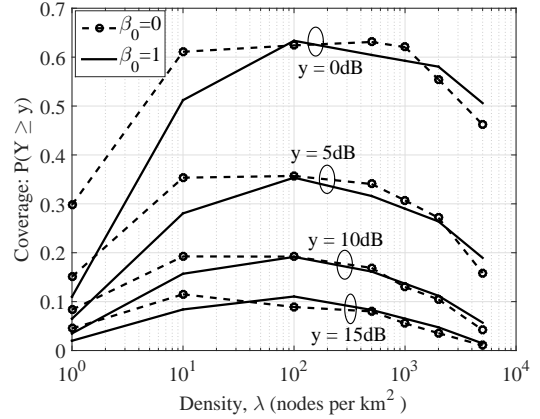}
	\caption{Coverage scaling for $\beta_0 < d$ and $F_m \sim \text{Composite}$. Here, $d=2$, $K =2$, $A_0 = 1$, $\beta_1 = 4$, $R_1 = 10$ m, $R_{\infty} = 40$ km.}
	\label{fig:InvuCoverage}
\end{figure}

\section{Conclusion}\label{s:Conclusion}
This paper analyzes the scaling regimes of the coverage probability and average rate under multi-slope path loss model and general fading distributions, considering strongest cell association in a Poisson field network. We show that when the fading distribution is regularly varying with index in $(-1, 0)$ or the near-field path loss exponent $\beta_0$ is larger than the network dimension $d$, both coverage and capacity saturate to a constant value in the ultra-dense regime. When the fading distribution is less heavy-tailed and when $\beta_0 < d$, both coverage and capacity are maximized at some finite density. Finally, our results show that path loss and fading have interchangeable effects on the tail behavior of SINR, coverage probability and capacity, extending previous results using dual slope path loss model and Rayleigh fading.

\appendices
\section{Proof of \propref{prop:tailP}}\label{appx:ProofTail}
    Using $l(r)$ given in \eref{eq:PL}, we have
    \begin{equation}\label{eq:tailP_expr}
        \tail_P(t) = \int_{0}^{R_{\infty}}\tail_m(tl(r))G(\d{r}) = \sum_{k=0}^{K-1} \mathcal{I}_k(t),
    \end{equation}
    where
    $\mathcal{I}_k(t) := \int_{R_k}^{R_{k+1}}\tail_m\pren{A_k r^{\beta_k} t} G(\d{r})$.
    
    (i) For $\beta_k > 0$: integration by parts with $\tail_m(A_k r^{\beta_k} t)$ and $G(\d{r})$ for $\mathcal{I}_k$ yields
    \begin{equation*}
        \mathcal{I}_k(t) = \tail_m\pren{A_k r^{\beta_k} t} G(r) \Big|_{R_k}^{R_{k+1}} + J_k(t),
    \end{equation*}
    where
    $J_k(t) = \int_{R_k}^{R_{k+1}}G(r)\d{F_m(A_k r^{\beta_k} t})$, 
    which reduces to \eref{eq:Jk} after change of variable $u = A_k r^{\beta_k} t$ and applying the condition $A_k R_{k+1}^{\beta_k} = A_{k+1} R_{k+1}^{\beta_{k+1}} = a_{k+1}$ due to \eref{eq:plScale}. On the other hand, applying again \eref{eq:plScale},
    \begin{equation}\label{eq:tailP_sum1}
        \sum_{k=0}^{K-1}\pren{\tail_m\pren{A_k r^{\beta_k} t} G(r) \Big|_{R_k}^{R_{k+1}}} 
        = \tail_m(a_K t).
    \end{equation}
    Hence, substituting \eref{eq:tailP_sum1} back to \eref{eq:tailP_expr} yields \eref{eq:tailP} with $k_0=0$.

    (ii) For $\beta_0 = 0$, we have:
    \begin{equation*}
	    \tail_P(t) = \mathcal{I}_0(t) + \sum_{k=1}^{K-1}\mathcal{I}_k(t) = 
	    \int_{0}^{R_1}\tail_m\pren{A_0 t} G(\d{r}) + \sum_{k=1}^{K-1}\mathcal{I}_k(t).
    \end{equation*}
    Thus,
    \begin{multline*}
        \tail_P(t) = \tail_m(A_0 t) G(R_1) + \tail_m\pren{A_{K-1} R_{\infty}^{\beta_{K-1}} t}G(R_{\infty}) \\
        - \tail_m\pren{A_{1} R_{1}^{\beta_1} t}G(R_1) + \sum_{k=1}^{K-1}J_k(t),
    \end{multline*}
    in which $A_0 = A_{1} R_{1}^{\beta_1}$ due to \eref{eq:plScale}. Hence, \eref{eq:tailP} with $k_0=1$.

\section{Proof of \thmref{thm:tailequiv}}\label{appx:ProofTailEquiv}
    Due to space limitations, we only give the proof for the case of $\tail_m \in \mcalr_{-\alpha}$ for $\alpha \in [0,\infty)$. The proof for the cases $\tail_m \in \mcalr_{-\infty}$ and $\tail_m = o(\bar{H}), H \in \mcalr_{-\infty}$ are similar and would be provided in a longer version of this paper. 
    
    $\tail_m$ can be represented as $\tail_m(x) \sim x^{-\alpha} L(x)$ for some $L \in \mcalr_0$. Then, by the monotone density theorem \cite{Bingham1989,Embrechts1997}, the density function $f_m$ of $F_m$ can be written as
    $f_m(t) \sim \alpha t^{-\alpha-1}L(t)$ as $t \to \infty$.
    
    \emph{For $k \geq 1$}, as $t \to \infty$,
    \begin{equation*}
        \E\pren{m^{\alpha_k}\one(a_k t \leq m < a_{k+1} t)} \sim \int_{a_k t}^{a_{k+1}t}\alpha x^{\alpha_k - \alpha - 1} L(x) \d{x}.
    \end{equation*}
    If $\alpha < \alpha_k$: by Karamata's theorem \cite[Prop.~1.5.8]{Bingham1989}, for $t_0 > 0$:
    \begin{align*}
        &\int_{a_k t}^{a_{k+1}t}\alpha x^{\alpha_k - \alpha - 1} L(x) \d{x} \\
        &= \int_{t_0}^{a_{k+1}t}\alpha x^{\alpha_k - \alpha - 1} L(x) \d{x} - \int_{t_0}^{a_{k}t}\alpha x^{\alpha_k - \alpha - 1} L(x) \d{x} \nonumber \\
        &\sim \frac{\alpha}{\alpha_k - \alpha}\pren{a_{k+1}^{\alpha_k - \alpha} - a_k^{\alpha_k - \alpha}} L(t) t^{\alpha_k - \alpha}, \quad \text{ as } t \to \infty.
    \end{align*}
    If $\alpha > \alpha_k$: similarly to the above case, we can easily obtain the same result using Karamata's theorem. Note that if $\alpha = \alpha_{\hat{k}}$ for some $\hat{k} \in [1,K-1]$, we can also show that $J_{\hat{k}}(t) \sim c L(t) t^{-\alpha}$ for some constant $c$. Thus, for $k \geq 1$,
    \begin{align}
        J_k(t) 
        \sim \frac{\alpha}{\alpha_k - \alpha} \frac{a_{k+1}^{\alpha_k - \alpha} - a_k^{\alpha_k - \alpha}}{A_k^{\alpha_k}R_{\infty}^d} L(t) t^{-\alpha}, \text{ as } t \to \infty. \label{eq:proofTailEquivk1}
    \end{align}
    \emph{For $k=0$}: If $\alpha_0 < \alpha$, then $\E(m^{\alpha_0})$ exists and $\E\pren{m^{\alpha_0}\one(0 \leq m < a_{1} t)} = \E(m^{\alpha_0})$ as $t \to \infty$. Thus, $J_0(t) = \E(m^{\alpha_0})A_0^{-\alpha_0} R_{\infty}^{-d} t^{-\alpha_0}$ as $t \to \infty$. Otherwise, i.e. if $\alpha_0 \geq \alpha$, then we have
    \begin{multline*}
        \E\pren{m^{\alpha_0}\one(0 \leq m < a_{1} t)} = \int_{0}^{a_1t}x^{\alpha_0}F_m(\d{x}) = \\
        \overset{(a)}{=} \frac{\alpha (a_1 t)^{\alpha_0}}{\alpha_0 - \alpha} \tail_m(a_1t) \overset{(b)}{=} \frac{\alpha (a_1 t)^{\alpha_0 - \alpha}}{\alpha_0 - \alpha} L(t),
    \end{multline*}
    where $(a)$ is due to \cite[Prop.~A3.8]{Embrechts1997}, and $(b)$ is due to the representation $\tail_m(x) \sim x^{-\alpha} L(x)$. Thus, as $t \to \infty$,
    \begin{align}
        J_0(t) 
        = \begin{cases} \E(m^{\alpha_0}) A_0^{-\alpha_0}R_{\infty}^{-d} t^{-\alpha_0}, & \text{ if } \alpha_0 < \alpha \\ \alpha (\alpha_0 - \alpha)^{-1} a_1^{\alpha_0-\alpha} L(t) t^{-\alpha}, & \text{ if } \alpha_0 \geq \alpha \end{cases}.\label{eq:proofTailEquivk0}
    \end{align}    
    By substituting \eref{eq:proofTailEquivk1} and \eref{eq:proofTailEquivk0} into the expressions of $\tail_P$ given by \propref{prop:tailP}, we have as $t \to \infty$,
    \begin{align*}
        &\tail_P(t) = \tail_m\pren{a_K t} + \sum_{k=i}^{K-1}J_k(t) \\
        & \sim \begin{cases} \pren{a_K^{-\alpha} + \sum_{k=1}^{K-1} C_k } \frac{L(t)}{t^{\alpha}}, \quad \text{if } \beta_0 = 0,\\
        \pren{a_K^{-\alpha} + \sum_{k=0}^{K-1} C_k} \frac{L(t)}{t^{\alpha}}, \quad \text{if } \beta_0 > 0, \alpha_0 \geq \alpha \\
        \pren{a_K^{-\alpha} + \sum_{k=1}^{K-1} C_k} \frac{L(t)}{t^{\alpha}} + \frac{\E(m^{\alpha_0})}{A_0^{\alpha_0} R_{\infty}^d t^{\alpha_0}}, \text{if } \beta_0 > 0, \alpha_0 < \alpha,
        \end{cases}
    \end{align*}
    where $C_k = \alpha(a_{k+1}^{\alpha_k - \alpha} - a_k^{\alpha_k - \alpha})\pren{(\alpha_k - \alpha)A_k^{\alpha_k}R_{\infty}^d}^{-1}$,
    and where for the last case with $\beta_0 > 0$ and $\alpha_0 < \alpha$, we further have $t^{-\alpha} = o(t^{-\alpha_0})$ as $t \to \infty$. As a result, $\tail_P$ is regularly varying with index $\alpha$ if $\alpha \leq \alpha_0$, and with index $\alpha_0$ if $\alpha_0 < \alpha$. Hence the proof for $\tail_m \in \mcalr_{-\alpha}$ with $\alpha \in [0,\infty)$.
\bibliographystyle{IEEEtrans}
\bibliography{IEEEabrv,Densification}

\end{document}